\newcommand{\set}[1]{\left\{#1\right\}}
\newcommand{\pr}[1]{\left(#1\right)}
\newcommand{\fpr}[1]{\mathopen{}\left(#1\right)}
\newcommand{\spr}[1]{\left[#1\right]}
\newcommand{\abs}[1]{{\left|#1\right|}}
\newcommand{\floor}[1]{{\left\lfloor#1\right\rfloor}}
\newcommand{\enpr}[2]{\pr{#1 ,\ldots , #2}}
\newcommand{\np}{\textbf{NP}}
\newcommand{\naturals}{\mathbb{N}}
\newcommand{\funcdef}[3]{{#1}:{#2} \to {#3}}
\newcommand{\define}{\leftarrow}
\newcommand{\fm}[1]{{\mathcal{#1}}}
\DeclareRobustCommand{\dispfunc}[2]{%
	\ensuremath{%
		\ifthenelse{\equal{#2}{}}%
			{\mathit{#1}}%
			{\mathit{#1}\fpr{#2}}}}
\newcommand{\bigO}[1]{\dispfunc{\mathcal{O}}{#1}}
\newcommand{\geom}[1]{\dispfunc{geo}{#1}}
\newcommand{\est}[1]{\dispfunc{\Delta}{#1}}
\newcommand{\dens}[1]{\dispfunc{dns}{#1}}
\newcommand{\dtname}[1]{\textsl{#1}}
\newcommand{\prbmatch}{\textsc{3Dmatch}\xspace}
\newcommand{\prbdense}{\textsc{Dense}\xspace}
\newcommand{\algcompute}{\textsc{Compute}\xspace}
\newcommand{\algupdate}{\textsc{Update}\xspace}
\newcommand{\algcore}{\textsc{Core}\xspace}
\newcommand{\algestimate}{\textsc{Estimate}\xspace}
\newcommand{\algexact}{\textsc{ExactCore}\xspace}
\newcommand{\algdense}{\textsc{EstDense}\xspace}
\newcommand{\alglb}{\textsc{lb}\xspace}
\newcommand{\alglub}{\textsc{lub}\xspace}
\newcommand{\algmergesort}{\textsc{MergeSort}\xspace}
\newcommand{\algunion}{\textsc{Union}\xspace}
\newcommand{\algdelete}{\textsc{Delete}\xspace}
\newtheorem{lemma}{Lemma}[section]
\newtheorem{proposition}{Proposition}[section]
\newtheorem{corollary}{Corollary}[section]
\newtheorem{definition}{Definition}[section]
\newtheorem{problem}{Problem}[section]
\definecolor{yafaxiscolor}{rgb}{0.3, 0.3, 0.3}
\definecolor{yafcolor1}{rgb}{0.4, 0.165, 0.553}
\definecolor{yafcolor2}{rgb}{0.949, 0.482, 0.216}
\definecolor{yafcolor3}{rgb}{0.47, 0.549, 0.306}
\definecolor{yafcolor4}{rgb}{0.925, 0.165, 0.224}
\definecolor{yafcolor5}{rgb}{0.141, 0.345, 0.643}
\definecolor{yafcolor6}{rgb}{0.965, 0.933, 0.267}
\definecolor{yafcolor7}{rgb}{0.627, 0.118, 0.165}
\definecolor{yafcolor8}{rgb}{0.878, 0.475, 0.686}
\newlength{\yafaxispad}
\newlength{\yaftlpad}
\newlength{\yaflabelpad}
\newlength{\yafaxiswidth}
\newlength{\yafticklen}
\def\pgfplots@drawtickgridlines@INSTALLCLIP@onorientedsurf#1{}
\newcommand{\yafdrawaxis}[4]{
	\pgfplotstransformcoordinatex{#1}\let\xmincoord=\pgfmathresult 
	\pgfplotstransformcoordinatex{#2}\let\xmaxcoord=\pgfmathresult 
	\pgfplotstransformcoordinatey{#3}\let\ymincoord=\pgfmathresult 
	\pgfplotstransformcoordinatey{#4}\let\ymaxcoord=\pgfmathresult 
	\pgfsetlinewidth{\yafaxiswidth} 
	\pgfsetcolor{yafaxiscolor}
	\pgfpathmoveto{\pgfpointadd{\pgfpointadd{\pgfplotspointrelaxisxy{0}{0}}{\pgfqpointxy{\xmincoord}{0}}}{\pgfqpoint{-0.5\yafaxiswidth}{\yafaxispad}}}
	\pgfpathlineto{\pgfpointadd{\pgfpointadd{\pgfplotspointrelaxisxy{0}{0}}{\pgfqpointxy{\xmaxcoord}{0}}}{\pgfqpoint{0.5\yafaxiswidth}{\yafaxispad}}}
	\pgfpathmoveto{\pgfpointadd{\pgfpointadd{\pgfplotspointrelaxisxy{0}{0}}{\pgfqpointxy{0}{\ymincoord}}}{\pgfqpoint{\yafaxispad}{-0.5\yafaxiswidth}}}
	\pgfpathlineto{\pgfpointadd{\pgfpointadd{\pgfplotspointrelaxisxy{0}{0}}{\pgfqpointxy{0}{\ymaxcoord}}}{\pgfqpoint{\yafaxispad}{0.5\yafaxiswidth}}}
	\pgfusepath{stroke}
}
\pgfplotsset{axis y line=left, axis x line=bottom,
	tick align=outside,
	tickwidth=\yafticklen,
	clip = false,
    x axis line style= {-, line width = 0pt, color=black!0},
    y axis line style= {-, line width = 0pt, color=black!0},
    x tick style= {line width = \yafaxiswidth, color=yafaxiscolor, yshift = \yafaxispad},
    y tick style= {line width = \yafaxiswidth, color=yafaxiscolor, xshift = \yafaxispad},
    x tick label style = {font=\small, yshift = \yaftlpad, inner xsep = 0pt},
    y tick label style = {font=\small, xshift = \yaftlpad},
    every axis y label/.style = {at = {(ticklabel cs:0.5)}, rotate=90, anchor=center, font=\small, yshift = -\yaflabelpad, inner sep = 0pt},
    every axis x label/.style = {at = {(ticklabel cs:0.5)}, anchor=center, font=\small, yshift = \yaflabelpad},
    x tick label style = {font=\small, yshift = 1pt},
    grid = major,
    major grid style  = {dash pattern = on 1pt off 3 pt},
	every axis plot post/.append style= {line width=\yafaxiswidth} ,
	legend cell align = left,
	legend style = {inner sep = 1pt, cells = {font=\scriptsize}},
	legend image code/.code={%
		\draw[mark repeat=2,mark phase=2,#1] 
		plot coordinates { (0cm,0cm) (0.15cm,0cm) (0.3cm,0cm) };%
	} 
}
\newcommand{\pgfprintduration}[1]{%
	\ifthenelse{\equal{#1}{}}{---}{%
	\pgfmathsetmacro{\minutes}{floor(#1 / 60)}%
	\pgfmathsetmacro{\seconds}{#1 - 60*\minutes}%
	\pgfmathifthenelse{\minutes > 0}{"\pgfmathprintnumber{\minutes}m \pgfmathprintnumber[fixed,precision=0]{\seconds}s"}{"\pgfmathprintnumber{\seconds}s"}\pgfmathresult}}
\begin{document}

\author[1]{\fnm{Nikolaj} \sur{Tatti}}\email{nikolaj.tatti@helsinki.fi}

\affil[1]{\orgdiv{HIIT}, \orgname{University of Helsinki}, \orgaddress{\country{Finland}}}

\title{Fast computation of distance-generalized cores using sampling}

\abstract{
Core decomposition is a classic technique for discovering densely connected
regions in a graph with large range of applications. Formally, a $k$-core
is a maximal subgraph where each vertex has at least $k$ neighbors.
A natural extension of a $k$-core is a 
$(k, h)$-core, where each node must have at least $k$ nodes
that can be reached with a path of length $h$.
The downside in using $(k, h)$-core decomposition is the significant
increase in the computational complexity: whereas the standard core decomposition can
be done in $\bigO{m}$ time, the generalization can require $\bigO{n^2m}$ time, where
$n$ and $m$ are the number of nodes and edges in the given graph.

In this paper we propose a randomized algorithm
that produces an $\epsilon$-approximation of $(k, h)$ core decomposition with a probability of $1 - \delta$ in
$\bigO{\epsilon^{-2} hm (\log^2 n - \log \delta)}$ time.
The approximation is based on sampling the neighborhoods of nodes, and we use Chernoff bound
to prove the approximation guarantee.
We also study distance-generalized dense subgraphs, show that the problem is \np-hard,
provide an algorithm for discovering such graphs with approximate core decompositions,
and provide theoretical guarantees for the quality of the discovered subgraphs.
We demonstrate empirically that approximating the decomposition
complements the exact computation: computing the approximation is significantly faster than computing the exact solution
for the networks where computing the exact solution is slow.}

\keywords{
distance-generalized core decomposition, sampling, approximation algorithm, Chernoff bounds
}

\maketitle

\section{Introduction}

Core decomposition is a classic technique for discovering densely connected
regions in a graph. The appeal of core decomposition is a simple and intuitive definition,
and the fact that the core decomposition can be computed in linear time.
Core decomposition has a large range of applications such as
graph visualization~\citep{alvarez2006large},
graph modeling~\citep{bollobas1984evolution}, social network analysis~\citep{seidman1983network},
internet topology modeling~\citep{Carmi03072007},
influence analysis~\citep{kitsak10influence,Ugander17042012}, bioinformatics~\citep{bader2003automated,hagmann08brain},
and team formation~\citep{Bonchi:2014kh}.

More formally,
a $k$-core is a maximal subgraph such that every vertex has at least $k$ degree.
We can show that $k$-core form a nested structure: $(k + 1)$-core is a subset of $k$-core,
and that the core decomposition can be discovered in linear time~\citep{batagelj2011fast}.
Core decomposition has been extended to directed~\citep{giatsidis2013d}, multi-layer~\citep{galimberti2017core},
temporal~\citep{galimberti2018mining}, and
weighted~\citep{serrano2009extracting}
networks.

A natural extension of core decomposition, proposed by~\citet{bonchi2019distance}, is a distance-generalized core
decomposition or $(k, h)$-core decomposition, where the degree is replaced by the number of nodes that
can be reached with a path of length $h$. Here, $h$ is a user parameter
and $h = 1$ reduces to a standard core decomposition.
Using distance-generalized core decomposition may produce a more refined
decomposition~\citep{bonchi2019distance}. Moreover, it can be used
when discovering $h$-clubs, distance-generalized dense subgraphs,
and distance-generalized chromatic numbers~\citep{bonchi2019distance}.

Studying such structures may be useful in graphs where paths of length $h$
reveal interesting information.  For example, assume a authorship network,
where an edge between a paper an a researcher indicate that the researcher was
an author of the paper. Then paths of length $2$ contain co-authorship
information. 

The major downside in using the distance-generalized core decomposition is the significant 
increase in the computational complexity: whereas the standard core decomposition can
be done in $\bigO{m}$ time, the generalization can require $\bigO{n^2m}$ time, where
$n$ and $m$ are the number of nodes and edges in the given graph.

To combat this problem we propose a randomized algorithm
that produces an $\epsilon$-approximation of $(k, h)$ core decomposition with a probability of $1 - \delta$ in
\[
	\bigO{\frac{hm\log n / \delta}{\epsilon^{2}}\log \frac{n\epsilon^{2}}{\log n / \delta}}
	\subseteq \bigO{\epsilon^{-2} hm (\log^2 n - \log \delta)}
\]
time.

The intuition behind our approach is as follows.
In order to compute the distance-generalized core decomposition we need
to discover and maintain $h$-neighborhoods for each node.
We can discover the
$h$-neighborhood of a node $v$ by taking the union of the $(h -
1)$-neighborhood of the adjacent nodes, which leads to a simple dynamic
program. The computational bottleneck comes from the fact that these
neighborhoods may become too large.  So, instead of computing the complete
neighborhood, we have a carefully selected budget $M$. The moment the
neighborhood becomes too large, we delete (roughly) half of the nodes, and to
compensate for the sampling we multiply our size estimate by 2. This procedure
is repeated as often as needed.  Since we are able to keep the neighbor samples
small, we are able to compute the decomposition faster. 

We use Chernoff bounds to determine an appropriate value for $M$, and provide
algorithms for maintaining the $h$-neighborhoods. The maintenance require special
attention since if the $h$-neighborhood becomes too small we need to bring back
the deleted nodes. 

Finally, we study distance-generalized subgraphs, a notion proposed
by~\citet{bonchi2019distance} that extends a notion of dense subgraphs. Here
the density is the ratio of $h$-connected node pairs and nodes. We show that
the problem is \np-hard and propose an algorithm based on approximate core maps,
extending the results by~\citet{bonchi2019distance}.

The rest of the paper is organized as follows.  In Section~\ref{sec:prel} we
introduce preliminary notation and formalize the problem.  In
Section~\ref{sec:algorithm} we present a naive version of the algorithm that
yields approximate results but is too slow.  We prove the approximation guarantee
in Section~\ref{sec:sampling}, and speed-up the algorithm in
Section~\ref{sec:speedup}. 
In Section~\ref{sec:dense} we study distance-generalized dense subgraphs.
We discuss the related work in
Section~\ref{sec:related}. Finally, we compare our method empirically against
the baselines in Section~\ref{sec:exp} and conclude the paper with discussion
in Section~\ref{sec:conclusions}.

This work extends the conference paper~\citep{tatti2021khcore}. 

\section{Preliminaries and problem definition}\label{sec:prel}
In this section we establish preliminary notation and define our problem.

Assume an undirected graph $G = (V, E)$ with $n$ nodes and $m$ edges. We will write
$A(v)$ to be the set of nodes adjacent to $v$. Given an integer $h$, we define an
\emph{$h$-path} to be a sequence of \emph{at most} $h + 1$ adjacent nodes. An
\emph{$h$-neighborhood} $N(v; h, X)$ is then the set of nodes that are reachable with an
$h$-path in a set of nodes $X$. If $X = V$ or otherwise clear from context, we will drop it from the notation.
Note that $N(v; 1) = A(v) \cup \set{v}$.

We will write $\deg_h(v; X) = \abs{N(v; h, X)} - 1$, where $X$ is 
a set of nodes and $v \in X$.
We will often drop $X$ from the notation if it is clear from the context.

A \emph{$k$-core} is the maximal subgraph of $G$ for which all nodes have at least a
degree of $k$.
Discovering the cores can be done in $\bigO{m}$
time by iteratively deleting the vertex with the smallest degree~\citep{seidman1983network}.

\citet{bonchi2019distance} proposed to extend the notion of $k$-cores to $(k, h)$-cores.
Here, given an integer $h$, a \emph{$(k, h)$-core} is the maximal graph $H$ of $G$
such that $\abs{N(v; h)} - 1 \geq k$ for each $v \in V(H)$, that is, we can reach at least $k$ nodes from $v$
with a path of at most $h$ nodes. The \emph{core number} $c(v)$ of a vertex $v$ is the largest $k$ such that $v$
is contained in $(k, h)$-core $H$. We will call $H$ as the \emph{core graph} of $v$ and
we will refer to $c$ as the \emph{core map}.

Note that discovering $(k, 1)$-cores is equal to discovering standard
$k$-cores.  We follow the same strategy when computing $(k, h)$-cores as with
standard cores: we iteratively find and delete the vertex with the smallest
degree~\citep{bonchi2019distance}. We will refer to the exact algorithm as \algexact.
While \algexact is guaranteed to produce the correct result the computational
complexity deteriorates to $\bigO{n^2m}$. The main reason here is that the neighborhoods
$N(v; h)$ can be significantly larger than just adjacent nodes $A(v)$.

In this paper we consider approximating cores.
\begin{definition}[approximative $(k, h)$-core]
Given a graph $G$ an integer $h$ and approximation guarantee $\epsilon$, an
$\epsilon$-approximative core map $\funcdef{c'}{V}{\naturals}$ maps a node
to an integer such that $\abs{c'(v) - c(v)} \leq \epsilon c(v)$ for each $v \in V$.
\end{definition}

We will introduce an algorithm that computes an $\epsilon$-approximative core map
with a probability of $1 - \delta$ in quasilinear time.

\section{Naive, slow algorithm}\label{sec:algorithm}

In this section we introduce a basic idea of our approach. This version of the algorithm 
will be still too slow but will approximate the cores accurately. We will prove the accuracy
in the next section, and then refine the subroutines to obtain the needed computational complexity.

The bottleneck for computing the cores is maintaining the $h$-neighborhood $N(v; h)$ for
each node $v$ as we delete the nodes. Instead of maintaining the complete $h$-neighborhood
we will keep only certain nodes if the neighborhood becomes too large. We then compensate
the sampling when estimating the size of the $h$-neighborhood.

Assume that we are given a graph $G$, an integer $h$, approximation guarantee $\epsilon$,
and a probability threshold $\delta$. Let us define numbers $C = \log(2n  / \delta)$ and
\begin{equation}
\label{eq:buffer}
    M = 1 + \frac{4(2 + \epsilon)}{\epsilon^2}(C + \log 8) \quad.
\end{equation}
The quantity $M$ will act as an upper bound for the sampled $h$-neighborhood, while $C$
will be useful when analyzing the properties of the algorithm.
We will see later that these specific values will yield the approximation guarantees.

We start the algorithm by sampling the rank of a node from a geometric distribution
$r[v] = \geom{1/2}$. Note that ties are allowed.
During the algorithm we maintain two key
variables $B[v, i]$ and $k[v, i]$ for each $v \in V$ and each index $i = 1, \ldots, h$.
Here, 
\[
	B[v, i] = \set{u \in N(v; i) \mid r[u] \geq k[v, i]}
\]
is a subset of $i$-neighborhood $N(v; i)$ consisting of nodes whose rank $r[u] \geq k[v, i]$. 
The threshold $k[v, i]$ is set to be as small as possible such that $\abs{B[v, i]} \leq M$.

We can estimate $c(v)$ from $B[v, h]$ and $k[v, h]$ as follows:
Consider the quantity $d = \abs{B[v, h] \setminus \set{v}}2^{k[v, h]}$.
Note that for an integer $k$ the probability of a vertex $v$ having a rank $r[v] \geq
k$ is $2^{-k}$. This hints that $d$ is a good estimate for $c(v)$.
We show in the next section that this is indeed the case but $d$ is lacking an
important property that we need in order to prove the correctness of the
algorithm. 
Namely, $d$ can increase while we are deleting nodes. To fix this pathological case
we estimate $c(v)$ with $\max(d, M2^{k[v, h] - 1})$ if $k[v, h] > 0$, and with $d$ if $k[v, h] = 0$.
The pseudo-code for the estimate is given in Algorithm~\ref{alg:estimate}.

\begin{algorithm}
\caption{$\algestimate(v)$, estimates $\abs{N(v, h)} - 1$ using $B[v, h]$ and $k[v, h]$.}
\label{alg:estimate}
$k \define k[v, h]$\;
$d \define  \abs{B[v, h] \setminus \set{v}}2^k$\;
\lIf {$k > 0$} {$d \define \max(d, M2^{k - 1})$}
\Return $d$\;
\end{algorithm}

To compute $B[v, i]$ we have the following observation.
\begin{proposition}
\label{prop:dyn}
For any $v \in V$ and any $i = 1, \ldots, h$, we have
\[
\begin{split}
	B[v, i] & = \set{u \in T \mid r[u] \geq k[v, i]}, \quad\text{where}\quad \\
	T & = \set{v} \cup \set{u \in B[w, i - 1] \mid w \in A(v)}\quad.
\end{split}
\]
Moreover, $k[v, i] \geq k[w, i - 1]$ for any $w \in A(v)$.
\end{proposition}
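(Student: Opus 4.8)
The plan is to first record the exact recursive structure of the neighborhoods and then drive everything from a single monotonicity argument. The elementary fact I would establish at the outset is that for $w \in A(v)$,
\[
	N(v; i) = \set{v} \cup \bigcup_{w \in A(v)} N(w; i - 1),
\]
which holds because any $i$-path starting at $v$ either stays at $v$ or takes a first step to some neighbor $w$ and continues as an $(i-1)$-path from $w$; conversely prepending the edge $v$--$w$ turns an $(i-1)$-path from $w$ into an $i$-path from $v$. In particular this gives the inclusion $N(w; i - 1) \subseteq N(v; i)$ for every $w \in A(v)$, which is the only structural input the rest of the argument needs.

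Next I would prove the inequality $k[v, i] \geq k[w, i - 1]$, since the set equality will lean on it. For a finite node set $X$ write $g_X(t) = \abs{\set{u \in X \mid r[u] \geq t}}$; this count is non-increasing in $t$ and drops to $0$ for large $t$, so the algorithm's choice is the well-defined minimum $k[X] = \min\set{t : g_X(t) \leq M}$. From $N(w; i - 1) \subseteq N(v; i)$ we get $g_{N(w; i-1)}(t) \leq g_{N(v; i)}(t)$ for every $t$. Evaluating at $t = k[v, i]$, the right-hand side is at most $M$ by the definition of $k[v, i]$, hence so is the left-hand side; thus $t = k[v, i]$ already satisfies the defining condition for $k[w, i - 1]$, and minimality yields $k[w, i - 1] \leq k[v, i]$.

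With the inequality available, the set equality follows by two inclusions. Writing $T = \set{v} \cup \set{u \in B[w, i-1] \mid w \in A(v)}$, the inclusion $T \subseteq N(v; i)$ is immediate from $B[w, i-1] \subseteq N(w; i-1) \subseteq N(v; i)$ together with $v \in N(v; i)$; therefore any $u \in T$ with $r[u] \geq k[v, i]$ lies in $N(v; i)$ and hence in $B[v, i]$, giving one direction. For the converse, take $u \in B[v, i]$, so $u \in N(v; i)$ and $r[u] \geq k[v, i]$. If $u = v$ then $u \in T$ trivially; otherwise the neighborhood recursion places $u \in N(w; i - 1)$ for some $w \in A(v)$, and since $r[u] \geq k[v, i] \geq k[w, i - 1]$ the node survives the threshold at $w$, so $u \in B[w, i - 1] \subseteq T$. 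In both cases $u \in \set{u \in T \mid r[u] \geq k[v, i]}$, which completes the equality.

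The step I expect to carry the real content is the threshold inequality $k[v, i] \geq k[w, i - 1]$: it is precisely what rules out the \emph{a priori} danger that unioning the sampled sets $B[w, i-1]$, each of which may already have discarded its low-rank nodes, could drop a node that ought to appear in $B[v, i]$. Once that monotonicity is nailed down, every node retained at the strictly higher bar $k[v, i]$ is guaranteed to have been retained at the lower bar $k[w, i - 1]$, and the remainder is bookkeeping on top of the neighborhood recursion.
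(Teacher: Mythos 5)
Your proof is correct and takes essentially the same route as the paper's: the containment $N(w; i-1) \subseteq N(v; i)$ yields the threshold monotonicity $k[v, i] \geq k[w, i-1]$, which in turn gives the sandwich $B[v, i] \subseteq T \subseteq N(v, i)$, and the set equality then follows from the definition of $B[v, i]$. The only difference is one of detail — you spell out the counting argument for the threshold inequality and both inclusions explicitly, where the paper compresses them into two sentences.
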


\begin{proof}
Let $w \in A(v)$. Since $N(w, i - 1) \subseteq N(v, i)$, we have $k[v, i] \geq k[w, i - 1]$.
Consequently, $B[v, i] \subseteq T \subseteq N(v, i)$, and by definition of $B[v, i]$, the claim follows.
\end{proof}

The proposition leads to \algcompute, an algorithm for computing $B[v, i]$ given in
Algorithm~\ref{alg:naivecompute}. Here, we form a set $T$, a union of sets $B[w, i - 1]$, where $w \in A(v)$.
After $T$ is formed we search for the threshold $k[v, i] \geq \max_{w \in A(v)} k[w, i - 1]$ that yields at most $M$ nodes in $T$,
and store the resulting set in $B[v, i]$.

\begin{algorithm}
\caption{Naive version of $\algcompute(v, i)$. Recomputes $B[v, i]$ and $k[v, i]$ from scratch.}
\label{alg:naivecompute}

$T \define (t_j)_{j = 1} \define \set{v} \cup \bigcup_{w \in A(v)} B[w, i - 1]$ sorted by $r[\cdot]$\;
$k \define \max \set{k[w, i - 1] \mid w \in A(v)}$\;

\If {$\abs{T} > M$} {
	$k \define \max(k, r[t_{\floor{M} + 1}] + 1)$\;
}

$B[v, i] \define \set{u \in T \mid r[u] \geq k}$\;
$k[v, i] \define k$\;

\end{algorithm}

As the node, say $u$, is deleted we need to update the affected nodes.
We do this update in Algorithm~\ref{alg:update} by
recomputing the neighbors $v \in A(u)$, and see if $B[v, i]$ and $k[v, i]$ have changed;
if they have, then we recompute $B[w, i + 1]$ for all $w \in A(v)$, and so on. 

\begin{algorithm}
\caption{Naive version of $\algupdate(u)$. Deletes $u$ and updates the affected $B[v, i]$ and $k[v, i]$.}
\label{alg:update}

delete $u$ from $G$\;

$U \define \emptyset$\;
\ForEach{$i = 1,\ldots, h$} {
	add neighbors of $u$ to $U$\;
	$W \define \emptyset$\;
	\ForEach {$v \in U$} {
		$\algcompute(v, i)$\;
		\If {$B[v, i]$ or $k[v, i]$ has changed} {
			add neighbors of $v$ in $W$\;
		}
        \lIf {$i = h$} {
            $d[v] \define \algestimate(v)$
        }
	}
	$U \define W$\;
}
\end{algorithm}

The main algorithm \algcore, given in Algorithm~\ref{alg:core}, initializes
$B[v, i]$ using \algcompute, deletes iteratively the nodes with smallest estimate $d[v]$ while updating the sets $B[v, i]$
with \algupdate.

\begin{algorithm}
\caption{$\algcore(G, \epsilon, C)$ approximative core decomposition. Setting
$C = \log(2n / \delta)$ yields an $\epsilon$-approximation with $1 - \delta$ probability.}
\label{alg:core}

\ForEach {$v \in V$} {
	$r[v] \define \geom{1/2}$\;
	$B[v, 0] \define \set{v}$\;
	$k[v, 0] \define 0$\;
}

$M \define 1 + \frac{4(2 + \epsilon)}{\epsilon^2}(C + \log 8)$\; 

\ForEach {$i = 1, \ldots, h$} {
	\lForEach {$v \in V$} {
		$\algcompute(v, i)$
	}
}
$c \define 0$\;
\While {graph is not empty} {
	$u \define \arg \min_v d[v]$ (use $k[v, h]$ as a tie breaker)\;
	$c \define \max(c, d[u])$\;
	output $u$ with $c$ as the core number\;
	$\algupdate(u)$\;
}

\end{algorithm}

\section{Approximation guarantee}
\label{sec:sampling}

In this section we will prove the approximation guarantee of our algorithm.
The key step is to show that \algestimate produces an accurate estimate.
For notational convenience, we need the following definition.

\begin{definition}
\label{def:est}
Assume $d$ integers $X = \enpr{x_1}{x_d}$ and an integer $M$. Define
\[
	S_i = \abs{\set{j \in \spr{d} \mid x_j \geq i}} \text{ and }
	T_i = \abs{\set{j \in \spr{d} \mid x_j \geq i, j \geq 2}}
\]
to be the number of integers larger than or equal to $i$.
Let $k \geq 0$ be the smallest integer for which $S_k \leq M$.
Define
\[
	\est{X; M} =
	\begin{cases}
	\max\pr{T_k2^k, M2^{k - 1}}, & \text{ if } k > 0, \\
	T_k2^k, & \text{ if } k = 0\quad.\\
	\end{cases}
\]
\end{definition}

Note that
if $R = \pr{r[w] \mid w \in N(v; h)}$ with $r[v]$ being the first element in $R$,
then $\est{R; M}$ is equal to the output of $\algestimate(v)$.

Our first step is to show that $\est{X; M}$ is monotonic.

\begin{proposition}
\label{prop:monotone}
Assume $M > 0$.
Let $x_1, \ldots, x_d$ be a set of integers.
Select $d' \leq d$.
Then
\[
	\est{x_1, \ldots, x_{d'}; M} \leq \est{x_1, \ldots, x_d; M}\quad.
\]
\end{proposition}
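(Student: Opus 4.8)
The plan is to reduce to removing a single trailing element and then run a short case analysis on how the threshold behaves. Dropping $x_{d'+1}, \ldots, x_d$ one at a time telescopes, so it is enough to prove $\est{y_1, \ldots, y_{s-1}; M} \le \est{y_1, \ldots, y_s; M}$ for a single deleted element $y_s$ whose index $s \ge 2$; the remaining case $d' = 0$ is immediate since the empty sequence has estimate $0$. Write $S_i, T_i, k$ for the quantities of Definition~\ref{def:est} applied to the full sequence and $S_i', T_i', k'$ for those of the reduced one, and let $\Delta$ and $\Delta'$ be the two estimates. Deleting $y_s$ lowers each count by the indicator $[y_s \ge i]$, so $S_i' \le S_i$ for all $i$, and since the deleted index $s \ge 2$ is counted by $T$, also $T_i' \le T_i$ for all $i$. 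Because every $i$ with $S_i \le M$ also satisfies $S_i' \le M$, the smallest admissible threshold can only shrink: $k' \le k$.

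I would first dispatch the case $k' = k$. Then the protective term $M 2^{k-1}$ is the same in both estimates while $T_k' 2^k \le T_k 2^k$, so $\Delta' \le \Delta$ follows directly (and when $k = 0$ the claim is just $T_0' \le T_0$).

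The hard part is $k' < k$, since a single deletion can drop the threshold by more than one, and the bare term $T_{k'}' 2^{k'}$ is then read off at a smaller exponent where its count need not be small — this is precisely the non-monotonicity that the $\max$ with $M 2^{k-1}$ is meant to suppress. To handle it I would exploit that $k' < k$ forces $k \ge 1$, so $\Delta \ge M 2^{k-1}$ by definition, and then bound both terms of $\Delta'$ by this quantity. The protective term is controlled by $M 2^{k'-1} \le M 2^{k-2} \le M 2^{k-1}$. For the data term the key inequality is $T_{k'}' \le S_{k'}' \le M$, which gives $T_{k'}' 2^{k'} \le M 2^{k'} \le M 2^{k-1}$; the boundary subcase $k' = 0$ collapses to $T_0' \le S_0' \le M \le M 2^{k-1}$. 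Taking the maximum, $\Delta' \le M 2^{k-1} \le \Delta$, which closes the argument. The only delicate point is the bookkeeping around $k' \le k$ and isolating the $M 2^{k-1}$ safeguard; once that is in place every inequality is elementary.
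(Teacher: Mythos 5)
Your proof is correct and, despite the preliminary telescoping reduction to single-element removals (which is superfluous, since your case analysis works verbatim for removing all of $x_{d'+1},\ldots,x_d$ at once), it is essentially the paper's own argument: split on $k'=k$ versus $k'<k$, and in the latter case use that $k\geq 1$ activates the safeguard $M2^{k-1}$, which dominates both the protective term $M2^{k'-1}$ and the data term $T_{k'}'2^{k'}\leq M2^{k'}$ of the truncated estimate. No gaps; every inequality you invoke, including $T_{k'}'\leq S_{k'}'\leq M$ from the minimality of $k'$, is exactly what the paper uses.
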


Note that this claim would not hold if we did not have the $M2^{k - 1}$ term in
the definition of $\est{X; M}$.

\begin{proof}
Let $k$, $S_i$, and $T_i$ be as defined for $\est{x_1, \ldots, x_d; M}$ in Definition~\ref{def:est}.
Also, let $k'$, $S_i'$, and $T_i'$ be as defined for $\est{x_1, \ldots, x_{d'}; M}$ in Definition~\ref{def:est}.

Since $S_i' \leq S_i$, we have $k' \leq k$.
If $k' = k$, the claim follows immediately since also $T_i' \leq T_i$. If $k' < k$, then
\[
	\est{x_1, \ldots, x_d; M} \geq M2^{k - 1} \geq M2^{k'} \geq T_{k'}' 2^{k'}
\]
and
\[
	\est{x_1, \ldots, x_d; M} \geq M2^{k - 1} \geq M2^{k' - 1},
\]
proving the claim.
\end{proof}

Next we formalize the accuracy of $\est{X; M}$.
We prove the claim in Appendix.

\begin{proposition}
\label{prop:approx}
Assume $0 < \epsilon \leq 1/2$. 
Let $\mathcal{R} = R_1, \ldots, R_d$ be independent random variables sampled from geometric distribution, $\geom{1/2}$.
Assume $C > 0$ and define $M$ as in Eq.~\ref{eq:buffer}.
Then
\begin{equation}
\label{eq:appr}
	\abs{\est{\mathcal{R}; M} - (d - 1)} \leq \epsilon (d - 1)
\end{equation}
with probability $1 - \exp\pr{-C}$. 
\end{proposition}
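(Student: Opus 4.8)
The plan is to control the two one‑sided failure events separately by applying multiplicative Chernoff bounds to the counts $S_i$ and $T_i$ of Definition~\ref{def:est}, which in the random setting are sums of independent Bernoulli variables with $\mathbb{E}[S_i] = d\,2^{-i}$ and $\mathbb{E}[T_i] = (d-1)2^{-i}$. First I would dispose of the trivial regime: since all geometric ranks are nonnegative, $S_0 = d$, so whenever $d \le M$ we have $k = 0$ and $\est{\mathcal{R}; M} = T_0 = d-1$ exactly. Hence only $d > M$ is interesting, and, crucially, this regime forces $M$ to be large (Eq.~\ref{eq:buffer} already makes $M$ at least a few dozen), which cleans up every lower‑order constant below. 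Throughout, write $\mu_i = d\,2^{-i}$, $\nu_i = (d-1)2^{-i}$, and let $\ell$ be the smallest integer with $\mu_\ell \le M$, so that $\mu_\ell \in (M/2, M]$ and $\ell \ge 1$.

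For the underestimate I would exploit the floor term as a built‑in lower bound: since $\est{\mathcal{R}; M} \ge M 2^{k-1}$, the event $\est{\mathcal{R}; M} < (1-\epsilon)(d-1)$ forces both $T_k 2^k < (1-\epsilon)(d-1)$ and $M2^{k-1} < (1-\epsilon)(d-1)$. The latter inequality pins the active level to $\nu_k > M/2$, so the underestimate is contained in $\bigcup_{i\,:\,\nu_i > M/2}\{T_i < (1-\epsilon)\nu_i\}$, a union over levels that all have large mean. The lower‑tail bound $\Pr[T_i < (1-\epsilon)\nu_i] \le \exp(-\epsilon^2\nu_i/2)$ makes the largest (smallest‑$\nu$) term at most $\exp(-\tfrac{\epsilon^2 M}{4(1-\epsilon)})$, and since the exponent doubles whenever $\nu_i$ doubles, each successive term is at most the square of its predecessor, so the sum is dominated by its leading term.

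For the overestimate I would split $\est{\mathcal{R}; M} > (1+\epsilon)(d-1)$ according to which term of the $\max$ is responsible. The floor‑driven part $M2^{k-1} > (1+\epsilon)(d-1)$ holds only if $k \ge j^\ast$ for a fixed threshold $j^\ast$; since $\{k \ge j^\ast\} = \{S_{j^\ast-1} > M\}$ and the defining inequality of $j^\ast$ forces $\mu_{j^\ast-1} < \tfrac{M}{1+\epsilon}\cdot\tfrac{d}{d-1}$, an upper‑tail bound makes this negligible. The count‑driven part $T_k 2^k > (1+\epsilon)(d-1)$ I would contain in $\{k \ge \ell+2\} \cup \bigcup_{i=1}^{\ell+1}\{T_i > (1+\epsilon)\nu_i\}$: the event $\{k \ge \ell+2\} = \{S_{\ell+1} > M\}$ is an upper‑tail deviation with $\mu_{\ell+1} \le M/2$, i.e.\ a factor‑two overshoot, hence tiny, while the remaining union runs over levels with smallest mean $\nu_{\ell+1} > \tfrac{d-1}{d}\cdot\tfrac{M}{4}$. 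The bound $\Pr[T_i > (1+\epsilon)\nu_i] \le \exp(-\tfrac{\epsilon^2\nu_i}{2+\epsilon})$ at the leading level $i=\ell+1$ is exactly where the constant $4(2+\epsilon)/\epsilon^2$ of Eq.~\ref{eq:buffer} is calibrated: using $\tfrac{d-1}{d} \ge \tfrac{M-1}{M}$ (valid because $d > M$), this term is at most $\exp(-\tfrac{\epsilon^2(M-1)}{4(2+\epsilon)}) = \exp(-C)/8$, and the series again telescopes.

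The main obstacle is the coupling between the data‑dependent stopping level $k$ and the floor term $M2^{k-1}$: one cannot fix $k$ at a single deterministic value, since near the boundary $\mu_\ell \approx M$ the level genuinely fluctuates, and the floor can overshoot $(d-1)$ by up to a factor of two at precisely the levels where $k$ might land. The resolution I would make explicit is that these two effects are anti‑correlated — a large floor requires $k$ to sit well above $\ell$, which is itself an improbable upper‑tail deviation of $S$ — so bounding $\Pr[S_{j^\ast-1} > M]$ directly, rather than crudely through $\Pr[k \ge \ell+1]$, is what closes the argument. The rest is bookkeeping so that the handful of leading terms and their geometric tails sum to at most $\exp(-C)$; here the $\log 8$ inside $M$ supplies exactly the factor‑eight slack, and the fact that $d > M$ guarantees $M$ large removes every stray $d/(d-1)$ and $M/(M-1)$ factor.
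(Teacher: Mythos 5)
Your proposal is correct, but it decomposes the failure event along genuinely different lines than the paper. The paper's proof (Lemma~\ref{lem:nice}) first pins the random stopping level $k$ into the three-element window $\set{\ell-1,\ell,\ell+1}$ via two one-sided events ($S_{\ell-2} > M$ and $S_{\ell+1}\le M$), applies \emph{two-sided} Chernoff bounds to $T_i$ at exactly those three levels, and budgets the resulting five events as $3\cdot\tfrac{2}{8}+2\cdot\tfrac{1}{8}=1$ times $e^{-C}$; the floor term $M2^{k-1}$ then costs no probability at all, because it is handled deterministically in the proposition's proof: if $k>\ell-1$, minimality of $k$ forces $S_{k-1}>M$, hence $T_{k-1}\ge M$ and $M2^{k-1}\le T_{k-1}2^{k-1}\le(1+\epsilon)(d-1)$, reusing the concentration already paid for at level $k-1$. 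You instead split into under- and overestimation and use only one-sided bounds over ranges of levels: for the underestimate you exploit the floor term itself ($M2^{k-1}<(1-\epsilon)(d-1)$) to restrict the possible levels to those with mean above $M/2$ and control a geometrically decaying union; for the overestimate you split by which argument of the $\max$ is responsible, spending an extra upper-tail event on $S_{j^*-1}>M$ for the floor-driven case --- an event with no counterpart in the paper, just as the paper's lower-tail event $S_{\ell-2}>M$ has no counterpart in your argument. Both routes rest on the same two pillars, namely replacing the data-dependent $k$ by unions over deterministic levels and the calibration of $M$ in Eq.~\ref{eq:buffer}, so neither is more powerful, but the trade-offs differ: your explicit dismissal of the regime $d\le M$ (where $k=0$ and the estimate equals $d-1$ exactly) cleanly sidesteps the boundary cases $\ell\le 1$, where the paper's lemma would involve indices with nominal success probability $2^{-(\ell-2)}>1$; the price is messier accounting, since the geometric-series sums and the stray $d/(d-1)$ and $M/(M-1)$ factors in the $j^*$ event must be absorbed by the largeness of $M$ and $d>M$, whereas the paper's finite $\tfrac{1}{8}$-weighted union and deterministic treatment of the floor term keep every inequality tight against the definition of $M$.
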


We are now ready to state the main claim.

\begin{proposition}
\label{prop:yield}
Assume graph $G$ with $n$ nodes, $\epsilon > 0$, and $C > 0$.
For each node $v \in V$, let $c(v)$ be the core number reported
by \algexact and let $c'(v)$ be the core number reported by
\algcore. Then with probability $1 - 2ne^{-C}$ 
\[
	\abs{c(v) - c'(v)} \leq \epsilon c(v),
\]
for every node in $V$.
Moreover, if $c(v) \leq M$, where $M$ is given in Eq.~\ref{eq:buffer},
then $c(v) = c'(v)$.
\end{proposition}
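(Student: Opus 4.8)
The plan is to derive a single variational description of both the exact and the approximate core numbers and then compare the two through the accuracy guarantee of Proposition~\ref{prop:approx}. For a vertex-induced subgraph $H$ and a vertex $w \in H$ write $\deg_H(w) = \abs{N_H(w; h)} - 1$ for the exact generalized degree of $w$ inside $H$, and write $\hat{d}_H(w) = \est{R; M}$ for the value \algestimate would return for $w$ if the current graph were exactly $H$, i.e.\ $R$ lists the ranks of $N_H(w; h)$ with $r[w]$ first. The classical peeling-with-running-maximum argument that proves correctness of core decomposition only uses that the peeled vertex quantity is monotone under vertex deletion; Proposition~\ref{prop:monotone} states precisely that $\hat{d}_{H'}(w) \leq \hat{d}_{H}(w)$ whenever $H' \subseteq H$, since deleting vertices only removes entries of $R$. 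Hence the same argument applied to \algcore yields
\[
	c(v) = \max_{H \ni v} \min_{w \in H} \deg_H(w)
	\quad\text{and}\quad
	c'(v) = \max_{H \ni v} \min_{w \in H} \hat{d}_H(w),
\]
where $H$ ranges over induced subgraphs containing $v$. This is the backbone: once both numbers are max--min expressions over the same family of subgraphs, the two approximation inequalities reduce to comparing $\hat{d}$ with $\deg$ on the subgraphs attaining the optima.

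For the lower bound $c'(v) \geq (1 - \epsilon) c(v)$ I would plug the exact optimiser, the core graph $H_v$ of $v$, into the formula for $c'$, giving $c'(v) \geq \min_{w \in H_v} \hat{d}_{H_v}(w)$. Every $w \in H_v$ satisfies $\deg_{H_v}(w) \geq c(v)$, and its own core graph $H_w$ is contained in $H_v$ because $c(w) \geq c(v)$ and cores are nested; by monotonicity $\hat{d}_{H_v}(w) \geq \hat{d}_{H_w}(w)$. It therefore suffices to apply Proposition~\ref{prop:approx} once per vertex, to the \emph{fixed} neighbourhood $N_{H_w}(w; h)$, to obtain $\hat{d}_{H_w}(w) \geq (1 - \epsilon) \deg_{H_w}(w) \geq (1 - \epsilon) c(w) \geq (1 - \epsilon) c(v)$. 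Taking the minimum over $w \in H_v$ proves the bound. The point that makes this side clean is that monotonicity lets me replace the random current neighbourhood by the fixed, structurally determined core graph before invoking the Chernoff estimate; this accounts for $n$ of the $2n$ failure events.

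The upper bound $c'(v) \leq (1 + \epsilon) c(v)$ is where I expect the real difficulty. It suffices to show $\min_{w \in H} \hat{d}_H(w) \leq (1 + \epsilon) c(v)$ for \emph{every} $H \ni v$, since $c'(v)$ is the maximum of the left-hand side over such $H$. Fixing $H$, the exact characterisation gives $\min_{w \in H} \deg_H(w) \leq c(v)$, so the exact minimiser $w_0$ has $\deg_H(w_0) \leq c(v)$, and if the upper tail $\hat{d}_H(w_0) \leq (1 + \epsilon)\deg_H(w_0)$ holds then $\min_{w \in H}\hat{d}_H(w) \leq \hat{d}_H(w_0) \leq (1 + \epsilon) c(v)$. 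The obstacle is that here monotonicity pushes the wrong way: the subgraph that matters is the approximate optimiser, which is itself a function of the random ranks, so I cannot simply reduce $w_0$'s neighbourhood to a fixed set and apply Proposition~\ref{prop:approx} to it. I would resolve this by strengthening the upper-tail half of Proposition~\ref{prop:approx} to a bound holding \emph{simultaneously over all subsets} of a vertex's neighbourhood, i.e.\ $\est{R'; M} \leq (1+\epsilon)(\abs{R'}-1)$ for every $R'$ obtained by deleting entries, which is exactly what covers whatever data-dependent neighbourhood the peeling produces. Proved per vertex with probability $1 - e^{-C}$, this supplies the remaining $n$ failure events, and a union bound over all $2n$ events gives the claimed probability $1 - 2n e^{-C}$.

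Finally, for the exact case $c(v) \leq M$ I would observe that \algestimate returns the true degree whenever its threshold is $k[v,h] = 0$, which by construction happens precisely while the maintained neighbourhood has size at most $M$, and that the clamp $\max(d, M2^{k-1})$ forces any already-sampled vertex to report an estimate exceeding $M$. Consequently, as long as the running core value stays below $M$, every vertex that can be the current minimum has an unsampled, hence exactly evaluated, neighbourhood, so the approximate peeling makes the identical choices as \algexact up to core value $M$. For vertices with $c(v) \leq M$ this gives $c'(v) = c(v)$ exactly, dovetailing with the approximate bounds above for the remaining vertices.
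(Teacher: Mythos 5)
Your lower bound and your argument for the exact regime $c(v) \leq M$ are sound, and your max--min packaging of the peeling algorithm (valid because Proposition~\ref{prop:monotone} supplies exactly the monotonicity that greedy peeling needs) is essentially the paper's own reasoning in different clothing; on that side you correctly replace the random current neighbourhood of $w$ by the fixed set $N(w;h) \cap H_w$ before invoking Proposition~\ref{prop:approx}, just as the paper does. The gap is in the upper bound: the strengthened tail bound you propose --- $\est{R'; M} \leq (1+\epsilon)(\abs{R'}-1)$ holding simultaneously for \emph{all} sublists $R'$ of a vertex's rank list, with probability $1 - e^{-C}$ per vertex --- is not merely unproved, it is false.

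To see this, let $R$ be the ranks of a full neighbourhood of size $d \gg M$ and let $R'$ consist precisely of the elements of largest rank, say those with rank at least $\ell$, where $\ell$ is chosen so that $\abs{R'} \approx 2M$. The estimator only ever inspects top-ranked elements, so it cannot distinguish $R'$ from $R$: on $R'$ the smallest feasible threshold is still about $\ell+1$, and already the clamp term gives $\est{R'; M} \geq M2^{\ell} \approx d/2$ (since $d2^{-\ell} \approx 2M$), while your target is $(1+\epsilon)(\abs{R'}-1) \approx 2(1+\epsilon)M$. The overshoot factor $d/(4M)$ is unbounded, and such a bad sublist exists for every typical realization of the ranks, so the uniform statement fails with probability close to one; no union-bound budget can repair that step. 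What the paper does instead is avoid any uniform bound over rank-dependent subgraphs: when \algcore deletes $w$ from the current graph $F$, it compares $d[w]$ not with an arbitrary vertex but with the vertex $u \in F$ that \algexact deletes \emph{first} among the vertices of $F$. Then $F \subseteq H_u$, where $H_u$ is the deterministic, rank-independent graph that \algexact sees just before removing $u$, so Proposition~\ref{prop:monotone} now points in the useful direction: the current estimate of $u$ is at most $\est{\mathcal{R}_u; M}$, where $\mathcal{R}_u$ ranks the fixed set $N(u,h) \cap H_u$; Proposition~\ref{prop:approx} applies to that fixed set, and the chain closes with $\abs{N(u,h) \cap H_u} - 1 \leq c(u) \leq c(v)$ because $v \in F \subseteq H_u$. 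That choice of comparison vertex --- dictated by \algexact's deterministic deletion order, costing one application of Proposition~\ref{prop:approx} per vertex --- is the missing idea, and it is what supplies the second batch of $n$ failure events and the overall probability $1 - 2ne^{-C}$.
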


We will prove the main claim of the proposition with two lemmas.
In both proofs we will use the variable $\tau_v$ which we define
to be the value of $d[v]$ when $v$ is deleted by \algcore.

The first lemma establishes a lower bound.

\begin{lemma}
The lower bound 
	$c'(v) \geq (1 - \epsilon)c(v)$
holds
with probability $1 - ne^{-C}$.
\end{lemma}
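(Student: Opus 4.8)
The plan is to show that $v$ cannot be removed by \algcore until the running maximum $c$ has already reached $(1-\epsilon)c(v)$. Let $H_v$ denote the core graph of $v$, i.e. the $(c(v),h)$-core; by definition every node $w\in H_v$ satisfies $\abs{N_{H_v}(w;h)}-1\geq c(v)$, where $N_{H_v}$ denotes the $h$-neighborhood taken inside $H_v$. First I would look at the first node $w^*$ of $H_v$ that \algcore deletes. At the moment $w^*$ is selected as the arg-min, every node of $H_v$ is still present, so the current graph $G'$ contains $H_v$ and hence $N_{G'}(w^*;h)\supseteq N_{H_v}(w^*;h)$. Using the remark that $\algestimate(w^*)$ equals $\est{R;M}$ evaluated on the ranks of the current neighborhood, together with the monotonicity of $\est{\cdot;M}$ (Proposition~\ref{prop:monotone}, after reordering the entries so that the smaller set is a prefix), I would conclude $\tau_{w^*}=d[w^*]\geq \est{R_{H_v};M}$, where $R_{H_v}$ collects the ranks of $N_{H_v}(w^*;h)$ with $r[w^*]$ listed first.

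The next step is to lower-bound $\est{R_{H_v};M}$ by $(1-\epsilon)c(v)$ with high probability. The set $N_{H_v}(w^*;h)$ is determined solely by $G$ and $h$ and does not depend on the random ranks, so its ranks are independent $\geom{1/2}$ variables and Proposition~\ref{prop:approx} applies, giving $\est{R_{H_v};M}\geq (1-\epsilon)\pr{\abs{N_{H_v}(w^*;h)}-1}\geq (1-\epsilon)c(v)$ with probability $1-e^{-C}$. Combining the two steps, $\tau_{w^*}\geq (1-\epsilon)c(v)$; and since $w^*$ is deleted no later than $v$ and the running maximum $c$ is nondecreasing, this forces $c'(v)\geq (1-\epsilon)c(v)$.

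The main obstacle is that $w^*$ is itself random---which node of $H_v$ disappears first depends on the sampled ranks---so I cannot simply invoke Proposition~\ref{prop:approx} for a fixed vertex and union bound only over the possible $w^*$. To obtain a clean union bound over exactly $n$ events I would instead attach to every vertex $w\in V$ the event $G_w$ that $\est{\cdot;M}$, evaluated on the ranks of $N_{H_w}(w;h)$ (the neighborhood inside $w$'s own core graph $H_w$), is at least $(1-\epsilon)\pr{\abs{N_{H_w}(w;h)}-1}$. Each $G_w$ fails with probability at most $e^{-C}$, so $\bigcap_{w}G_w$ holds with probability at least $1-ne^{-C}$. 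The remaining piece is to transfer $G_{w^*}$ to the neighborhood inside $H_v$: since $w^*\in H_v$ we have $c(w^*)\geq c(v)$, and by the nested structure of $(k,h)$-cores $H_{w^*}\subseteq H_v$, whence $N_{H_{w^*}}(w^*;h)\subseteq N_{H_v}(w^*;h)$; monotonicity then upgrades the bound on $\est{R_{H_{w^*}};M}$ to the same bound on $\est{R_{H_v};M}$, while $\abs{N_{H_{w^*}}(w^*;h)}-1\geq c(w^*)\geq c(v)$ supplies the required factor $(1-\epsilon)c(v)$. On the event $\bigcap_{w}G_w$ this argument runs simultaneously for every $v$, yielding the lemma with probability $1-ne^{-C}$.
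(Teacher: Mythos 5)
Your proposal is correct and follows essentially the same route as the paper's proof: a union bound over per-node events (each node's estimate evaluated on the ranks of its $h$-neighborhood inside its \emph{own} core graph $H_w$, which is a deterministic set so Proposition~\ref{prop:approx} applies), then the first-deleted-node argument with the chain $c'(v) \geq \tau_{w^*} \geq \est{\mathcal{R}_{w^*}; M} \geq (1-\epsilon)(\abs{S_{w^*}}-1) \geq (1-\epsilon)c(w^*) \geq (1-\epsilon)c(v)$, using the nesting $H_{w^*} \subseteq H_v \subseteq F$ and Proposition~\ref{prop:monotone}. Your explicit handling of the randomness of $w^*$ and of the reordering needed to apply the prefix-monotonicity is exactly the (implicit) content of the paper's argument.
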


\begin{proof}
For each node $v \in V$, let $R_v$ be a rank, an independent random variable sampled
from geometric distribution, $\geom{1/2}$.

Let $H_v$ be the core graph of $v$ as solved by \algexact.
Define $S_v = N(v, h) \cap H_v$ to be the $h$-neighborhood of $v$ in $H_v$.
Note that $c(v) \leq \abs{S_v} - 1$.
Let $\mathcal{R}_v$ be the list of ranks $\pr{R_w ; w \in S_v}$
such that $R_v$ is always the first element.

Proposition~\ref{prop:approx} combined with the union bound states that
\begin{equation}
\label{eq:upapp}
	\abs{\est{\mathcal{R}_v; M} - (\abs{S_v} - 1)} \leq \epsilon (\abs{S_v} - 1)\quad.
\end{equation}
holds with probability $1 - ne^{-C}$ for every node $v$.
Assume that these events hold.

To prove the claim, select a node $v$
and let $w$ be the first node in $H_v$ deleted by \algcore.
Let $F$ be the graph right before deleting $w$ by \algcore.
Then
\begin{align*}
	c'(v) & \geq c'(w) & \text{\hspace{-1cm}(\algcore picked $w$ before $v$ or $w = v$)}\\
	& \geq \tau_w  \\
	& \geq \est{\mathcal{R}_w; M} & \text{\hspace{-1cm}($H_w \subseteq H_v \subseteq F$ and Prop.~\ref{prop:monotone})} \\
	& \geq (1 - \epsilon) (\abs{S_w} - 1) & \text{(Eq.~\ref{eq:upapp})} \\
	& \geq (1 - \epsilon)c(w) & \text{($S_w = N(w, h) \cap H_w$)} \\
	& \geq (1 - \epsilon) c(v), & \text{($w \in H_v$)} \\
\end{align*}
proving the lemma.
\end{proof}

Next, we establish the upper bound.

\begin{lemma}
The upper bound 
	$c'(v) \leq (1 + \epsilon)c(v)$
holds
with probability $1 - ne^{-C}$.
\end{lemma}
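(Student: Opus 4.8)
The plan is to mirror the structure of the lower-bound lemma, again using the quantities $\tau_v$ together with a witness-subgraph argument, but now in the reverse direction. The starting point is the standard peeling characterization of the running maximum in \algcore: the reported value satisfies $c'(v) \ge t$ if and only if, at some step of \algcore, the graph still contains $v$ and every surviving node $u$ has current estimate $d[u] \ge t$. This is exactly the moment the running maximum reaches $t$, since \algcore always deletes the node of minimum $d$. Fix $v$, set $t = c'(v)$, and let $F$ be the corresponding subgraph, so that $v \in V(F)$ and $d[u] \ge t$ for all $u \in V(F)$.

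First I would convert each estimate into a statement about the true current neighborhood. The goal is to show that, with high probability, $d[u] \le (1+\epsilon)\pr{\abs{N_F(u; h)} - 1}$ for every surviving $u$, where $N_F(u; h)$ is the $h$-neighborhood of $u$ computed inside $F$. Granting this, every $u \in V(F)$ reaches at least $t / (1+\epsilon)$ other nodes within $F$; since this count is an integer it is at least $\lceil t/(1+\epsilon)\rceil$. Hence $F$ is a subgraph in which every node satisfies the $(\lceil t/(1+\epsilon)\rceil, h)$-core condition, so $F$ is contained in that core, and since $v \in V(F)$ we get $c(v) \ge t/(1+\epsilon) = c'(v)/(1+\epsilon)$, which rearranges to $c'(v) \le (1+\epsilon) c(v)$. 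The ``Moreover'' clause would follow by observing that $\est{X; M}$ is computed with threshold $k = 0$, and is therefore exactly $\abs{X} - 1$, whenever $\abs{X} \le M$; so in the regime $c(v) \le M$ the estimate carries no sampling error and, together with the lower-bound lemma, forces $c'(v) = c(v)$.

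I expect the main obstacle to be exactly the step $d[u] \le (1+\epsilon)\pr{\abs{N_F(u; h)} - 1}$. In the lower-bound lemma this direction was free: the deterministic set $S_w = N(w, h) \cap H_w$ lay inside the random current neighborhood, and monotonicity of $\est{\cdot}$ (Proposition~\ref{prop:monotone}) turned the fixed-set guarantee of Proposition~\ref{prop:approx} into a bound on $\tau_w$. Here the inequality points the other way, so the subset-plus-monotonicity trick no longer applies: bounding $\est{\mathcal{R}; M}$ for $N_F(u; h)$ by the estimate of a deterministic superset, such as the full neighborhood $N(u; h)$, over-counts and gives a bound proportional to $\abs{N(u; h)}$ rather than to $\abs{N_F(u; h)}$. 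Because $F$, and hence every set $N_F(u; h)$, is itself a function of the random ranks, I cannot apply Proposition~\ref{prop:approx} to $N_F(u; h)$ directly without risking a dependence between the chosen set and the ranks inside it.

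The resolution I would pursue is to anchor the application of Proposition~\ref{prop:approx} to a family of deterministic sets, namely the $h$-neighborhoods taken inside the core graphs produced by \algexact, and to argue that the current neighborhood arising at the critical step either coincides with, or is sandwiched between, members of this fixed family. Since there is one such set per node, a single union bound over these $n$ rank-independent events yields the claimed failure probability $n e^{-C}$ and makes the over-estimation bound legitimate for the realized $N_F(u; h)$. The delicate point to verify is precisely this coincidence between the random peeling state $F$ and the deterministic exact-core structure; once it is in place, the witness-subgraph argument above closes the proof.
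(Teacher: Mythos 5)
Your peeling characterization of $c'(v)$ and the idea of anchoring Proposition~\ref{prop:approx} to deterministic sets produced by \algexact are both sound, and you have correctly located the crux. But the step your whole argument hinges on --- $d[u] \le (1+\epsilon)\pr{\abs{N_F(u;h)}-1}$ for \emph{every} surviving node $u \in V(F)$ --- is precisely what cannot be established, and your sketched resolution does not repair it. The sandwiching you propose holds for only one node of $F$: the containment $F \subseteq H_u$ (hence $N_F(u;h) \subseteq S_u = N(u,h)\cap H_u$) is valid exactly when $u$ is the \emph{first} node of $F$ that \algexact deletes; for any other surviving node $u'$, \algexact has already removed some vertices of $F$ before reaching $u'$, so $F \not\subseteq H_{u'}$ and the random set $N_F(u';h)$ has no controlled relation to the deterministic family $\set{S_w}$. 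Worse, even where the containment does hold, monotonicity (Proposition~\ref{prop:monotone}) plus Proposition~\ref{prop:approx} only yields $d[u] \le (1+\epsilon)(\abs{S_u}-1)$, a bound relative to the deterministic \emph{superset}, never relative to the realized $\abs{N_F(u;h)}$ --- this is the over-counting you yourself flag, and it never goes away. Proving your per-node inequality directly would require controlling the selection bias between the ranks and the peeling state (nodes survive in \algcore partly \emph{because} their estimates are inflated), which a union bound over $n$ deterministic sets cannot do.

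The missing idea is that one should not compare $d[u]$ to the realized current neighborhood at all. The paper's proof drops the witness subgraph: for each $w$ deleted by \algcore no later than $v$, let $F$ be the graph at that moment and let $u$ be the node of $F$ that \algexact deletes first among $V(F)$. Then $\tau_w \le d[u]$ because \algcore deletes the minimizer; $d[u] \le \est{\mathcal{R}_u; M} \le (1+\epsilon)(\abs{S_u}-1)$ by Proposition~\ref{prop:monotone} and Proposition~\ref{prop:approx} applied to the deterministic $S_u$; and then --- this is the step that renders the over-count harmless --- $\abs{S_u}-1 \le c(u) \le c(v)$, since $\abs{S_u}-1$ is $u$'s degree at its deletion by \algexact, and $v \in F \subseteq H_u$ means $v$ is deleted by \algexact no earlier than $u$, so the running maximum gives $c(u) \le c(v)$. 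Taking the maximum over all such $w$ yields $c'(v) = \max_w \tau_w \le (1+\epsilon)c(v)$ with the same $1 - ne^{-C}$ union bound you intended. (A side remark: the ``Moreover'' clause about $c(v) \le M$ that you address belongs to Proposition~\ref{prop:yield}, not to this lemma.)
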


\begin{proof}
For each node $v \in V$, let $R_v$ be an independent random variable sampled
from geometric distribution, $\geom{1/2}$.

Consider the exact algorithm \algexact for solving the $(k, h)$ core problem.
Let $H_v$ be the graph induced by the
existing nodes right before $v$ is removed by \algexact. 
Define $S_v = N(v, h) \cap H_v$ to be the $h$-neighborhood of $v$ in $H_v$.
Note that $c(v) \geq \abs{S_v} - 1$.
Let $\mathcal{R}_v$ be the list of ranks $\pr{R_w ; w \in S_v}$
such that $R_v$ is the first element.

Proposition~\ref{prop:approx} combined with the union bound states that
\begin{equation}
\label{eq:downapp}
	\abs{\est{\mathcal{R}_v; M} - (\abs{S_v} - 1)} \leq \epsilon (\abs{S_v} - 1)\quad.
\end{equation}
holds with probability $1 - ne^{-C}$ for every node $v$.
Assume that these events hold.

Select a node $v$. Let $W$ be the set containing $v$ and the nodes selected \emph{before} $v$ by \algcore.
Select $w \in W$.
Let $F$ be the graph right before deleting $w$ by \algcore. 
Let $u$ be the node in $F$ that is deleted first by \algexact.
Let $\beta$ be the value of $d[u]$ when $w$ is deleted by \algcore.
Then
\begin{align*}
	\tau_w & \leq \beta & \text{(\algcore picked $w$ over $u$ or $w = u$)}\\
	& \leq  \est{\mathcal{R}_u; M}   & \text{($F \subseteq H_u$ and Proposition~\ref{prop:monotone})}\\ 
	& \leq  (1 + \epsilon) (\abs{S_u} - 1) & \text{(Eq.~\ref{eq:downapp})}\\
	& \leq (1 + \epsilon) c(u) & \text{($S_u = N(u, h) \cap H_u$)} \\
	& \leq (1 + \epsilon) c(v) \quad. & \text{($v \in F \subseteq H_u$)}
\end{align*}

Since this bound holds for \emph{any} $w \in W$, we have
\[
	c'(v) = \max_{w \in W} \tau_w \leq (1 + \epsilon)c(v),
\]
proving the lemma.
\end{proof}

We are now ready to prove the proposition.

\begin{proof}[Proof of Proposition~\ref{prop:yield}]
The probability that one of the two above lemmas does not hold
is bounded by the union bound with $2ne^{-C}$, proving the main claim.

To prove the second claim note that when $d[v] \leq M$ then
$d[v]$ matches accurately the number of the remaining nodes that can be reached
by an $h$-path from a node $v$. On the other hand, if there is a node $w$ that reaches more than $M$
nodes, we are guaranteed that $d[w] \geq M$ and $k[w, h] > 0$, implying that \algcore will always prefer
deleting $v$ before $w$.
Consequently, at the beginning \algcore will select the nodes in the same order as \algexact and reports the same core number
as long as there are nodes with $d[v] \leq M$ or, equally, as long as $c(v) \leq M$.
\end{proof}

\section{Updating data structures faster}\label{sec:speedup}

Now that we have proven the accuracy of \algcore, our next step is to address
the computational complexity. The key problem is that \algcompute is called too often
and the implementation of \algupdate is too slow.

As \algcore progresses, the set $B[v, i]$ is modified in two ways. The first case is
when some nodes become too far away, and we need to delete these 
nodes from $B[v, i]$. The second case is when we have deleted enough nodes so that
we can lower $k[v, i]$ and introduce new nodes. Our naive version of \algupdate calls \algcompute for both cases.
We will modify the algorithms so that \algcompute is called only to handle the second case, and the first case is handled separately.
Note that these modifications do not change the output of the algorithm.

First, we change the information stored in $B[v, i]$. Instead of storing just a
node $u$, we will store $(u, z)$, where $z$ is the number of neighbors $w \in
A(v)$, such that $u$ is in $B[w, i - 1]$.
We will store $B[v, i]$ as a linked list \emph{sorted} by the rank.
In addition, each node $u \in B[w, i - 1]$ is augmented with an array $Q = (q_v \mid v \in A(w))$. 
An entry $q_v$ points to the location of $u$ in $B[v, i]$ if $u$ is present in $B[v, i]$.
Otherwise, $q_v$ is set to null.

We will need two helper functions to maintain $B[v, i]$. The first function
is a standard merge sort, $\algmergesort(X, Y)$, that merges two sorted lists
in $\bigO{\abs{X} + \abs{Y}}$ time, maintaining the counters and the pointers. 

The other function is $\algdelete(X, Y)$ that removes nodes in $Y$ from $X$,
which we will use to remove nodes from $B[v, i]$.
The deletion is done in by reducing the counters of the corresponding nodes in
$X$ by 1, and removing them when the counter reaches 0. It is vital that
we can process a single node $y \in Y$ in constant time. This will be possible
because we will be able to use the pointer array described above.

Let us now consider calling \algcompute. We would like to minimize the number of 
calls of \algcompute.
In order to do that, we need additional
bookkeeping. The first additional information is $m[v, i]$ which is the number of
neighboring nodes $w \in A(v)$ for which $k[w, i - 1] = k[v, i]$.
Proposition~\ref{prop:dyn} states 
that $k[v, i] \geq k[w, i - 1]$, for all $w \in A(v)$.
Thus, if $m[v, i] > 0$, then there is a node $u \in A(v)$ with $k[v, i] = k[u, i - 1]$
and so recomputing $B[v, i]$ will not change $k[v, i]$ and will not add new nodes in $B[v, i]$.

Unfortunately, maintaining just $m[v, i]$ is not enough. We may have $k[v, i] >
k[w, i - 1]$ for any $w \in A(v)$ \emph{immediately} after \algcompute. In such case,
we compute sets of nodes
\[
	X_w = \set{u \in B[w, i - 1] \mid r[u] = k[v, i] - 1},
\]
and combine them in $D[v, i]$, a union of $X_w$ along with the counter information 
similar to $B[v, i]$, that is,
\[
	D[v, i] = \set{(u, z) \mid z = \abs{\set{w \in A(v) \mid u \in X_w}} > 0}\quad.
\]
The key observation is that as long as $\abs{B[v, i]} + \abs{D[v, i]} > M$,
the level $k[v, i]$ does not need to be updated. 

There is one complication, namely, we need to compute $D[v, i]$ in $\bigO{M\deg v}$ time.
Note that, unlike $B[v, i]$, the set $D[v, i]$ can have more than $M$ elements.
Hence, using \algmergesort will not work. Moreover, a stock $k$-ary merge sort requires $\bigO{M\deg(v) \log \deg(v)}$ time.
The key observation to avoid the additional $\log$ factor 
is that $D[v, i]$ does not need to be sorted.
More specifically, we first compute an array
\[
	a[u] = \abs{\set{w \in A(v) \mid u \in X_w}},
\]
and then extract the non-zero entries to form $D[v, i]$.
We only need to compute the non-zero entries so we can compute these entries in
$\bigO{\sum \abs{X_w}} \subseteq \bigO{M\deg v}$ time. Moreover, since we do
not need to keep them in order we can extract the non-zero entries in the same time.
We will refer this procedure as \algunion, taking the sets $X_w$ as input and forming $D[v, i]$.

We need to maintain $D[v, i]$ efficiently. In order to do that we augment
each node $u \in B[w, i - 1]$ with an array $(q_v \mid v \in A(w))$, where
$q_v$ points to the location of $u$ in $D[v, i]$ if $u \in D[v, i]$.

The pseudo-code for the updated \algcompute is given in Algorithm~\ref{alg:compute2}.
Here we compute $B[v, i]$ and $k[v, i]$ first by using \algmergesort iteratively and trimming the resulting set if it has more than $M$ elements.
We proceed to compute $m[v, i]$ and $D[v, i]$. If $m[v, i] = 0$, we compute $D[v, i]$ with \algunion.
Note that if $m[v, i] > 0$, we leave $D[v, i]$ empty.
The above discussion leads immediately to the computational complexity of \algcompute.

\begin{proposition}
$\algcompute(v, i)$ runs in $\bigO{M \deg v}$ time.
\end{proposition}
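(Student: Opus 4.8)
The plan is to bound the cost of Algorithm~\ref{alg:compute2} phase by phase and observe that each phase is $\bigO{M \deg v}$. There are three phases: (i) building $B[v, i]$ together with the threshold $k[v, i]$ by iteratively merging the neighbor sets $B[w, i - 1]$; (ii) the bookkeeping that produces $m[v, i]$ and $d[v, i]$; and (iii) the conditional construction of $D[v, i]$ with \algunion when $m[v, i] = 0$. Summing the three bounds yields the claim.

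For phase (i) I would first record the invariant that $\abs{B[w, i - 1]} \leq M$ for every neighbor $w \in A(v)$, which holds by the definition of $B[\cdot, \cdot]$. I would maintain a single running sorted list $R$, initialized empty, and fold in the sets $B[w, i - 1]$ one at a time using \algmergesort; after each merge I trim $R$ back to its top-ranked elements so that $\abs{R} \leq M$ at all times. Because both arguments of every \algmergesort call then have size at most $M$, each such call costs $\bigO{M}$ (this already includes maintaining the augmenting counters and pointers), and trimming a list of size at most $2M$ is also $\bigO{M}$. Since there are $\deg v$ neighbors, phase (i) runs in $\bigO{M \deg v}$; the final threshold $k[v, i]$ and the floor $\max_{w} k[w, i - 1]$ are recovered along the way in $\bigO{\deg v}$ additional time.

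Phase (ii) is a single scan over $A(v)$: computing $m[v, i]$ amounts to counting the neighbors with $k[w, i - 1] = k[v, i]$, which is $\bigO{\deg v}$, while \algestimate is $\bigO{1}$ once $\abs{B[v, i]}$ and $k[v, i]$ are known. For phase (iii), each set $X_w = \set{u \in B[w, i - 1] \mid r[u] = k[v, i] - 1}$ satisfies $\abs{X_w} \leq \abs{B[w, i - 1]} \leq M$, so $\sum_{w} \abs{X_w} \leq M \deg v$; by the discussion preceding the proposition, \algunion forms $D[v, i]$ from the $X_w$ in $\bigO{\sum_{w} \abs{X_w}} \subseteq \bigO{M \deg v}$ time, avoiding a spurious $\log \deg v$ factor precisely because $D[v, i]$ need not be kept sorted and can be extracted from the counting array $a[u]$. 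Adding the three phases gives $\bigO{M \deg v}$.

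The main obstacle is justifying the trimming in phase (i): without it, folding in $\deg v$ lists of size $M$ would leave a running list as large as $M \deg v$, and a sorted union would cost $\bigO{M \deg v \log \deg v}$, breaking the bound. The point I would make is that trimming is harmless for correctness — discarding the low-ranked tail of $R$ can never remove an element of the eventual $B[v, i]$, since appending further neighbor sets only raises the count at each rank and hence can only raise the final threshold $k[v, i]$ (the same monotonicity of the threshold exploited in Proposition~\ref{prop:monotone}). Thus the top-$M$ elements retained at each step always contain the final output, the running list stays within $\bigO{M}$, and every merge is genuinely $\bigO{M}$, which is what pins the total at $\bigO{M \deg v}$.
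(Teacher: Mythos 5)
Your proof is correct and follows essentially the same route as the paper: the paper establishes this bound through the discussion preceding the proposition, namely that $B[v,i]$ is built by iterated \algmergesort calls with trimming to size at most $M$ (so each of the $\deg v$ merges costs $\bigO{M}$), and that $D[v,i]$ is formed by \algunion in $\bigO{\sum_w \abs{X_w}} \subseteq \bigO{M \deg v}$ time precisely because it need not be sorted. Your additional remark that trimming is safe by monotonicity of the threshold is a valid observation that the paper handles implicitly via Proposition~\ref{prop:dyn} and the structure of Algorithm~\ref{alg:compute2}.
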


The pseudo-code for \algupdate is given in Algorithm~\ref{alg:update2}.
Here, we maintain a stack $U$ of tuples $(v, Y)$, where $v$ is the node that requires an update,
and $Y$ are the nodes that have been deleted from $B[v, i]$ during the previous round. First, if $\abs{B[v, i]} + \abs{D[v, i]} \leq M$
and $m[v, i] = 0$, we run $\algcompute(v, i)$. Next, we proceed by reducing the counters of $Z$
in $B[w, i + 1]$ and $D[w, i + 1]$ for each $w \in A(v)$. We also update $m[w, i + 1]$.
Finally, we add $(w, Z)$ to the next stack, where $Z$ are the deleted nodes in $B[w, i + 1]$.

\begin{proposition}
\algupdate maintains $B[v, i]$ correctly.
\end{proposition}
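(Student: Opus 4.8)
The plan is to prove the proposition by stating four invariants and maintaining them inductively across every deletion; the first of them is exactly the correctness statement, and the remaining three are the bookkeeping that justifies the branching in \algupdate. Concretely I would track, for all $v$ and all $i$: (I1) $B[v, i] = \set{u \in N(v; i) \mid r[u] \geq k[v, i]}$ with $k[v, i]$ the smallest threshold giving $\abs{B[v, i]} \leq M$; (I2) every stored pair $(u, z) \in B[v, i]$ has $z = \abs{\set{w \in A(v) \mid u \in B[w, i - 1]}}$; (I3) $m[v, i] = \abs{\set{w \in A(v) \mid k[w, i - 1] = k[v, i]}}$; and (I4) $D[v, i]$ equals the claimed union of the rank-$(k[v,i]-1)$ sets whenever $m[v, i] = 0$, and is empty otherwise. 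The base case is the initialization in \algcore, which calls \algcompute for every $(v, i)$ and therefore establishes all four invariants by Proposition~\ref{prop:dyn}.

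For the inductive step I would fix one deletion and argue that \algupdate restores I1--I4 while sweeping the levels $i = 1, \ldots, h$ in increasing order. The key structural fact is that $B[v, i]$ depends only on the $B[w, i - 1]$ with $w \in A(v)$ (Proposition~\ref{prop:dyn}), so once level $i - 1$ is correct the update at level $i$ is local. I would then use the clean dichotomy exploited by the algorithm: a deletion can only remove nodes from $N(v; i)$, so at a \emph{fixed} threshold $B[v, i]$ can only shrink, while new nodes enter solely through a lowering of $k[v, i]$. The shrinking case is handled by the counters: by I2 a node $x$ remains in $B[v, i]$ exactly while it still lies in some $B[w, i - 1]$, i.e.\ while its counter is positive, so decrementing the counters of the freshly removed set $Z$ propagated up from level $i - 1$ (via \algdelete) and evicting entries whose counter hits $0$ faithfully maintains $B[v, i]$ with $k[v, i]$ held fixed. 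Invariant I1 guarantees consistent truncation across levels, since $r[x] \geq k[v, i] \geq k[w, i - 1]$ for every contributing $w$, so no node that ought to stay is dropped and none that left is retained.

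The main obstacle, deserving the most care, is showing that the branch condition ``$\abs{B[v, i]} + \abs{D[v, i]} \leq M$ and $m[v, i] = 0$'' invokes \algcompute exactly when $k[v, i]$ must be lowered, and leaves it uncalled otherwise. I would argue both directions. If $m[v, i] > 0$, some neighbor $w$ has $k[w, i - 1] = k[v, i]$; since any recomputation sets the new threshold to at least $\max_{w'} k[w', i - 1] \geq k[v, i]$, the level cannot drop below its current value, and with the level fixed the only change is the counter-shrink already performed, so skipping \algcompute is correct. If instead $m[v, i] = 0$, then $\max_{w'} k[w', i - 1] \leq k[v, i] - 1$, so lowering the threshold to $k[v, i] - 1$ is admissible under the max-constraint, and by I4 the nodes that would enter are precisely those of $D[v, i]$; the recomputed set would thus have size $\abs{B[v, i]} + \abs{D[v, i]}$, so lowering is within budget, and hence necessary because $k[v, i]$ is defined to be minimal, if and only if this quantity is at most $M$ --- exactly the tested condition. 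Therefore \algcompute runs precisely in the cases where I1 would otherwise fail.

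Finally I would close the loop on the auxiliary invariants. Because $k$-values are non-increasing over the whole run (deletions only remove nodes, so each minimal threshold can only fall), whenever a neighbor's threshold $k[w, i - 1]$ or $v$'s own $k[v, i]$ changes during the sweep, the single affected count $m[\cdot, \cdot]$ can be corrected in place as \algupdate does, restoring I3; and $D[v, i]$ is regenerated by \algunion exactly on the $m[v, i] = 0$ branch and left empty otherwise, matching I4. Propagating the deletion set $Z$ from level $i$ to the level-$(i + 1)$ sets $B[w, i + 1]$ and $D[w, i + 1]$ of each $w \in A(v)$ feeds the next iteration, so the induction advances cleanly through all $h$ levels. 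Since I1 then holds for every $(v, i)$ after the sweep, \algupdate maintains $B[v, i]$ correctly, completing the proof.
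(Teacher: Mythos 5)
Your overall route is the same as the paper's: Proposition~\ref{prop:dyn} gives the dichotomy (at a fixed threshold $B[v,i]$ can only shrink as nodes are deleted, and new nodes enter only when $k[v,i]$ is lowered), \algdelete with the counters handles the shrinking case, and the branch condition on Line~\ref{alg:computecond} must be shown safe whenever \algcompute is skipped. The paper states exactly this in three sentences; you expand it into an invariant-based induction over deletions and levels, which is a legitimate and more explicit way to organize the same argument.

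However, one of your invariants is too strong, and the step built on it overclaims. Your I4 asserts that $D[v,i]$ equals the union of the rank-$(k[v,i]-1)$ sets \emph{whenever} $m[v,i]=0$. This is transiently false: $D[v,i]$ is populated only by \algcompute, and \algcompute leaves it empty when it finds $m[v,i]>0$. If a later deletion lowers some neighbor's $k[w,i-1]$ and decrements $m[v,i]$ to $0$, then at the moment $(v,\cdot)$ is popped at level $i$ we have $m[v,i]=0$ but $D[v,i]=\emptyset$, stale. Consequently your conclusion that \algcompute ``runs precisely in the cases where I1 would otherwise fail'' does not hold in that state: the test $\abs{B[v,i]}+\abs{D[v,i]}\le M$ passes vacuously and \algcompute runs even though $k[v,i]$ may not need lowering. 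This does not break correctness --- calling \algcompute is always safe, and the direction you actually need (skipping implies nothing would change) survives: skipping with $m[v,i]=0$ forces $\abs{B[v,i]}+\abs{D[v,i]}>M$, and since $\abs{B[v,i]}\le M$ this forces $D[v,i]\neq\emptyset$, hence $D[v,i]$ was computed at a time when $m[v,i]=0$ and has been correctly maintained by deletions alone ever since (any node added later to some $B[w,i-1]$ has rank strictly below $k[v,i]-1$, so $D[v,i]$ never needs insertions). So the repair is local: weaken I4 to hold only when $D[v,i]\neq\emptyset$, and replace the ``precisely'' claim by ``whenever needed.'' As written, though, the invariant you induct on is not actually preserved mid-sweep, so the induction as stated does not close.
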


\begin{proof}
As \algcore deletes nodes from the graph,
Proposition~\ref{prop:dyn} guarantees that
$B[v, i]$ can be modified only in two ways: either node $u$ is deleted
from $B[v, i]$ when $u$ is no longer present in any $B[w, i - 1]$ where $w \in A(v)$,
or $k[v, i]$ changes and new nodes are added.

The first case is handled properly as \algupdate uses \algdelete
whenever a node is deleted from $B[w, i - 1]$.

The second case follows since if $\abs{B[v, i]} + \abs{D[v, i]} > M$
or $m[v, i] > 0$, then we know that \algcompute will not change $k[v, i]$ and will not introduce new nodes in $B[v, i]$.
\end{proof}

\begin{algorithm}

\caption{Refined version of $\algcompute(v, i)$. Recomputes $B[v, i]$ and $k[v, i]$ from scratch.}
\label{alg:compute2}

$T \define {(v, 1)}$\;

$k \define \max\pr{k[w, i - 1] \mid w \in A(v)}$\;

\ForEach {$w \in A(v)$} {
	$T \define \algmergesort(T, \set{(u, c) \in B[w, i - 1] \mid r[u] \geq k})$\;
	\If {$\abs{T} > M$} {
		$k \define \max(k, r[t_{\floor{M} + 1}] + 1)$
		\tcp*{$(t_j) = T$}
		$T \define \set{u \in T \mid r[u] \geq k}$\;
	}
}

$B[v, i] \define T$\;
$k[v, i] \define k$\;

$m[v, i] \define \abs{\set{w \in A(v) \mid k[w, i - 1] = k[v, i]}}$\;
$D[v, i] \define \emptyset$\;
\If {$m[v, i] = 0$} {
	$X_w \define \set{u \in B[w, i - 1] \mid r[u] = k - 1}$ for $w \in A(v)$\;
	$D[v, i] \define \algunion(\set{X_w \mid w \in A(v)})$\;
}

\end{algorithm}

\begin{algorithm}

\caption{Refined version of $\algupdate(u)$. Deletes $u$ and updates the affected $B[v, i]$ and $k[v, i]$.}
\label{alg:update2}

$U \define \emptyset$\;

\ForEach{$i = 1, \ldots, h$} {
	add $(u, B[u, i - 1])$ to $U$\;
	$B[u, i - 1] \define \emptyset$\;
	$W \define \emptyset$\;
	\While {$U$ is not empty} {
		$(v, Y) \define $ pop entry from $U$\nllabel{alg:setx}\;

		$k \define k[v, i]$\;
		\If {$\abs{B[v, i]} + \abs{D[v, i]} \leq M$ and $m[v, i] = 0$ \nllabel{alg:computecond}} {
			$\algcompute(v, i)$\;
		}
		\lIf {$i = h$} {
            $d[v] \define \algestimate(v)$
        }

		\If {$i < h$ and $(Y \neq \emptyset$ or $k[v, i] \neq k)$} {
		\ForEach{$w \in A(v)$, $w \neq u$} {
			$X_1 \define $ pointers of $Y$ in $B[w, i + 1]$\;
			$X_2 \define $ pointers of $Y$ in $D[w, i + 1]$\;
			$\algdelete(B[w, i + 1], X_1)$\;
			$\algdelete(D[w, i + 1], X_2)$\;
			$Z \define$ nodes removed from $B[w, i + 1]$\nllabel{alg:sety}\;

			\If {$k[v, i] \neq k$ and $k = k[w, i + 1]$} {
				$m[w, i + 1] \define m[w, i + 1] - 1$\;
			}
			add $(w, Z)$ to $W$\;

		}
		}
	}
	$U \define W$\
}
delete $u$ from $G$\nllabel{alg:removeu}\;

\end{algorithm}

\begin{proposition}
\label{prop:time}
Assume a graph $G$ with $n$ nodes and $m$ edges.
Assume $0 < \epsilon \leq 1/2$, constant $C$, and the maximum path length $h$.
The running time of \algcore is bounded by
\[
	\bigO{hmM\log \frac{n}{M}} =
	\bigO{hmC\epsilon^{-2}\log \frac{n\epsilon^{2}}{C}}
\]
with a probability of $1 - n\exp(-C)$, where $M$ is defined in Eq.~\ref{eq:buffer}.
\end{proposition}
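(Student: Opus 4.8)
The plan is to charge essentially the whole running time to calls of \algcompute, for which the preceding proposition already gives a bound of $\bigO{M \deg v}$ per call, and then to bound the number of such calls. First I would observe that, since \algcore only ever removes nodes, every neighborhood $N(v;i)$ only shrinks, so each threshold $k[v,i]$ is non-increasing throughout the run. Moreover, the guard on line~\ref{alg:computecond} of \algupdate, namely $\abs{B[v,i]} + \abs{D[v,i]} \le M$ together with $m[v,i]=0$, is exactly the condition under which a recomputation strictly lowers $k[v,i]$: because $m[v,i]=0$ forces $k[w,i-1] \le k[v,i]-1$ for all $w \in A(v)$ (Proposition~\ref{prop:dyn}) and $D[v,i]$ collects precisely the nodes of rank $k[v,i]-1$, the guard certifies that at most $M$ nodes have rank $\ge k[v,i]-1$, so the recomputed threshold drops by at least one. (The guard cannot even be satisfied at $k[v,i]=0$ for a non-isolated $v$, since then some neighbor has $k[w,i-1]=0$ and $m[v,i]>0$.) Hence, writing $k_0[v,i]$ for the initial value set in the initialization loop, the total number of calls of $\algcompute(v,i)$ over the whole execution is at most $1 + k_0[v,i]$.

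The second and main ingredient is a high-probability bound $k_0[v,i] = \bigO{\log(n/M)}$. Since $N(v;1) \subseteq \cdots \subseteq N(v;h)$, the thresholds satisfy $k_0[v,1] \le \cdots \le k_0[v,h]$, so it suffices to bound $k_0[v,h]$ for each of the $n$ nodes. Fix $v$, let $N = \abs{N(v;h)} \le n$, and write $S_k = \abs{\set{u \in N(v;h) \mid r[u] \ge k}}$, so that $k_0[v,h]$ is the least $k$ with $S_k \le M$. Because $r[u] = \geom{1/2}$ gives $\Pr\spr{r[u] \ge k} = 2^{-k}$, we have $\mathbb{E}\spr{S_k} = N2^{-k}$, and a Chernoff bound shows that for $k^\star = \lceil \log(n/M)\rceil + \bigO{1}$ the event $S_{k^\star} > M$ has probability at most $e^{-C}$; here the choice of $M$ in Eq.~\ref{eq:buffer}, which is proportional to $C$, is precisely what drives the tail below $e^{-C}$. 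A union bound over the $n$ nodes then gives $k_0[v,i] \le k_0[v,h] \le k^\star = \bigO{\log(n/M)}$ simultaneously for all $v$ and $i$ with probability $1 - ne^{-C}$.

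Combining the two ingredients, and using $\sum_{v} \deg v = 2m$, the total cost of all \algcompute calls is
\[
	\sum_{i=1}^h \sum_{v \in V} \pr{1 + k_0[v,i]} \bigO{M \deg v}
	\subseteq \bigO{\log(n/M)} \sum_{i=1}^h \sum_{v\in V} \bigO{M\deg v}
	= \bigO{hmM\log(n/M)}.
\]
It then remains to check that the remaining bookkeeping is dominated by this term. The \algmergesort and \algunion work is already accounted for inside the $\bigO{M\deg v}$ charge of each \algcompute call, and the initialization contributes only the subsumed $\bigO{hmM}$. For the deletions, every counter decrement performed by \algdelete can be charged to a distinct earlier insertion of that node into some $B[\cdot,\cdot]$ or $D[\cdot,\cdot]$, and all insertions occur inside \algcompute; hence the total \algdelete work is bounded by the total \algcompute work. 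Finally, maintaining $\arg\min_v d[v]$ with a bucket-based priority queue over the range $[0,n]$ gives $\bigO{1}$ amortized cost per update and per extraction, so the main loop is a lower-order term. Substituting $M = \Theta(C\epsilon^{-2})$ from Eq.~\ref{eq:buffer} (valid for $0 < \epsilon \le 1/2$) and $\log(n/M) = \log\pr{n\epsilon^2/C} + \bigO{1}$ then yields the second form.

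The hard part will be the second paragraph: establishing $k[v,i] = \bigO{\log(n/M)}$ with the claimed $1 - ne^{-C}$ probability. This needs a Chernoff tail tight enough to be absorbed by the specific constants hidden in $M$ (the factors $4(2+\epsilon)/\epsilon^2$ and $C + \log 8$), and some care because the ranks $r[u]$ are shared across all neighborhoods, so the per-node events are dependent; the union bound sidesteps the dependence but forces the per-node tail to be at most $e^{-C}$. A secondary subtlety is the amortization for \algdelete, where one must verify that charging each decrement to an insertion is well-defined across the interleaving of the levels $i$ and the cascading updates in \algupdate.
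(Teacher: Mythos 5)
Your skeleton matches the paper's proof: charge $\bigO{M\deg v}$ per call of \algcompute, bound the number of calls per pair $(v,i)$ through the number of values taken by $k[v,i]$, show $k[v,i] \in \bigO{\log(n/M)}$ with probability $1-ne^{-C}$, and argue that the remaining bookkeeping is dominated. However, your central invariant --- that the guard on line~\ref{alg:computecond} fires exactly when a recomputation strictly lowers $k[v,i]$, so that $\algcompute(v,i)$ is called at most $1+k_0[v,i]$ times --- is false. It silently assumes that $D[v,i]$ always contains the nodes of rank $k[v,i]-1$. But when a call of \algcompute ends with $m[v,i]>0$, it deliberately leaves $D[v,i]=\emptyset$; when later decrements drive $m[v,i]$ to $0$, the guard $\abs{B[v,i]}+\abs{D[v,i]}\le M$ holds vacuously (we always have $\abs{B[v,i]}\le M$), even though $\bigcup_{w\in A(v)}B[w,i-1]$ may contain far more than $M$ nodes of rank at least $k[v,i]-1$. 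The ensuing call of \algcompute then keeps $k[v,i]$ unchanged and merely populates $D[v,i]$. The paper handles exactly this case by proving that \algcompute is invoked at most \emph{twice} per value of $k[v,i]$: once to set $m[v,i]=0$ and build $D[v,i]$, and once to actually lower the level. Your asymptotic conclusion survives (the count only doubles), but the invariant as you state it does not, so this step needs the two-case argument.

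On the level bound you and the paper genuinely part ways. You plan a fresh Chernoff-plus-union-bound argument for $k_0[v,h]=\bigO{\log(n/M)}$ and flag it as the hard part; the paper gets it in one line by reusing Proposition~\ref{prop:approx}: when $k[v,h]>0$ the estimate satisfies $d[v]\ge M2^{k[v,h]-1}$, while Proposition~\ref{prop:approx} with a union bound gives $d[v]\le(1+\epsilon)c(v)\le 2n$ for all $v$ with probability $1-ne^{-C}$, hence $k[v,i]\le k[v,h]\le 2+\log_2(n/M)$. Your direct route can be made to work --- it essentially reproves the event $S_{\ell+1}\le M$ of Lemma~\ref{lem:nice}, and the constants in Eq.~\ref{eq:buffer} are generous enough --- but it duplicates machinery the paper already established, which is precisely why this step is not hard in the paper's proof. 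Finally, your amortization of the update cost (decrements charged to insertions, bucket queue for the minimum) differs from the paper's accounting of $\bigO{M\log(n/M)}$ deletions and $\bigO{\log(n/M)}$ level drops per pair $(v,i)$, each costing $\bigO{\deg v}$; both yield $\bigO{hmM\log(n/M)}$, but note that your charging scheme must also cover the $\bigO{\deg v}$ scan of $A(v)$ triggered when $k[v,i]$ drops with $Y=\emptyset$, which produces no \algdelete decrement to charge; that cost is naturally charged to the level drop, as in the paper.
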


\begin{proof}
We will prove the proposition by bounding $R_1 + R_2$, where $R_1$ is the
total time needed by \algcompute and the $R_2$ is the total time needed by the
inner loop in \algupdate.

We will bound $R_1$ first.
Note that a single call of $\algcompute(v, i)$ requires $\bigO{M\deg v}$ time.

To bound the number of \algcompute calls,
let us first bound $k[v, i]$. 
Proposition~\ref{prop:approx} and union bound implies that
\[
	M2^{k[v, i] - 1} \leq (1 + \epsilon)c(v) \leq 2n
\]
holds for all nodes $v \in V$ with probability $1 - n\exp(-C)$.
Solving for $k[v, i]$ leads to
\begin{equation}
\label{eq:levelbound}
	k[v, i] \leq 2 + \log_2 \frac{n}{M} \in \bigO{\log \frac{n}{M}} \quad.
\end{equation}

We claim that $\algcompute(v, i)$ is called at most twice per each value of $k[v, i]$.
To see this, consider that $\algcompute(v, i)$ sets $m[v, i] = 0$. Then we also set $D[v, i]$ and we are guaranteed by
the first condition on Line~\ref{alg:computecond} of \algupdate that the next call of $\algcompute(v, i)$ will lower $k[v, i]$.
Assume now that $\algcompute(v, i)$ sets $m[v, i] > 0$.
Then the second condition on Line~\ref{alg:computecond} of \algupdate guarantees that the next call of $\algcompute(v, i)$ 
either keeps $m[v, i]$ at 0 (and computes $D[v, i]$) or lowers $k[v, i]$.

In summary, the time needed by \algcompute is bounded by 
\[
	R_1 \in \bigO{\sum_{i, v} M\deg(v) \log \frac{n}{M}}
	= \bigO{hmM\log \frac{n}{M}} \quad.
\]

Let us now consider $R_2$.
For each deleted node in $B[v, i]$ or for each lowered $k[v, i]$ the inner for-loop requires $\bigO{\deg v}$ time.
Equation~\ref{eq:levelbound} implies that the total number of deletions from $B[v, i]$ is in $\bigO{M\log \frac{n}{M}}$,
and that we can lower $k[v, i]$ at most $\bigO{\log \frac{n}{M}}$ times.
Consequently,
\[
	R_2 \in \bigO{h\sum_v (M + 1)\log \frac{n}{M} \deg v} = \bigO{hmM\log \frac{n}{M} }\ .
\]
We have bounded both $R_1$ and $R_2$ proving the main claim. 
\end{proof}

\begin{corollary}
\label{cor:time}
Assume real values $\epsilon > 0$, $\delta > 0$, a graph $G$ with $n$ nodes and $m$ edges.
Let $C = \log(2n / \delta)$. Then \algcore yields $\epsilon$ approximation in
\[
	\bigO{\frac{hm\log n / \delta}{\epsilon^{2}}\log \frac{n\epsilon^{2}}{\log n / \delta}}
\]
time with $1 - \delta$ probability.
\end{corollary}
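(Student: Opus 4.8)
The plan is to read off the corollary as a direct specialization of the two preceding results—the approximation guarantee of Proposition~\ref{prop:yield} and the running-time proposition immediately above—to the particular choice $C = \log(2n/\delta)$, followed by a routine asymptotic cleanup of the logarithmic factors. The one computation that drives everything is $\exp(-C) = \delta/(2n)$, hence $2n\exp(-C) = \delta$ and $n\exp(-C) = \delta/2$; I would record this identity at the very start.

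For the probability bookkeeping I would invoke Proposition~\ref{prop:yield} first: with this substitution its failure probability $2ne^{-C}$ becomes exactly $\delta$, so $\abs{c(v) - c'(v)} \leq \epsilon c(v)$ holds for every $v$ with probability $1-\delta$. The running-time proposition nominally fails with probability only $n e^{-C} = \delta/2$, but the event it actually relies on (the bound $M2^{k[v,i]-1} \leq (1+\epsilon)c(v) \leq 2n$) is the same upper-bound-type event already contained in the good event of Proposition~\ref{prop:yield}. Thus I would argue that \emph{within the single good event of probability} $1-\delta$ both the approximation and the claimed running time hold simultaneously, so no additional union-bound term is incurred and the final probability is $1-\delta$ rather than $1-\tfrac{3}{2}\delta$.

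It then remains to rewrite the bound $\bigO{hmC\epsilon^{-2}\log(n\epsilon^2/C)}$ after substituting $C = \log(2n/\delta)$. I would use $\log(2n/\delta) = \Theta(\log(n/\delta))$ — valid in the regime $n/\delta \geq 2$, where $\log(2n/\delta) \geq \log 2 > 0$ and $\log(2n/\delta) \leq 2\log(n/\delta)$ — to absorb the constant factor $2$ both inside and outside the nested logarithm, yielding $\bigO{\epsilon^{-2}hm\log(n/\delta)\,\log(n\epsilon^2/\log(n/\delta))}$, which is precisely the stated form. I would also note in passing that $M = 1 + \tfrac{4(2+\epsilon)}{\epsilon^2}(C+\log 8) \in \Theta(\epsilon^{-2}\log(n/\delta))$, using $C+\log 8 = \Theta(C)$ since $C \geq \log 2$, so that the running-time proposition's own identity $\bigO{hmM\log(n/M)} = \bigO{hmC\epsilon^{-2}\log(n\epsilon^2/C)}$ can be taken as given.

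Because the statement is a corollary of theorems already proved, I do not anticipate a genuine obstacle. The only two points requiring care are the probability accounting just described—ensuring the running-time event is \emph{subsumed by}, not added to, the approximation event—and verifying that the constants hidden inside and outside the nested logarithm may legitimately be dropped, i.e. that the argument of the outer logarithm stays bounded away from $1$ so that the $\bigO$ expression remains meaningful in the intended large-$n$ regime.
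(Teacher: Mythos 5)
Your overall route is the same as the paper's: the corollary appears there with no explicit proof, being a direct specialization of Proposition~\ref{prop:yield} and the running-time proposition to $C = \log(2n/\delta)$, and your substitution $e^{-C} = \delta/(2n)$ together with the cleanup $\log(2n/\delta) = \Theta(\log(n/\delta))$ (and $M \in \Theta(\epsilon^{-2}\log(n/\delta))$) is exactly the intended derivation of the displayed time bound.

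The one step that does not hold up is the subsumption claim in your probability accounting. The good event of Proposition~\ref{prop:yield} is the conjunction, over nodes $v$, of the accuracy events for the estimates $\est{\mathcal{R}_v; M}$ computed from ranks restricted to the sets $S_v = N(v,h)\cap H_v$ determined by \algexact (two events per node, and only at level $h$). The running-time proposition instead needs $M2^{k[v,i]-1} \leq 2n$ for every node and every level $i$ throughout the execution of \algcore; by monotonicity of $k[v,i]$ in $i$ and over time this reduces to bounding $k[v,h]$ at initialization, which is determined by the ranks of \emph{all} of $N(v;h)$ in the full graph. Since $S_v \subseteq N(v;h)$ and Proposition~\ref{prop:monotone} makes $\est{\cdot}$ monotone \emph{increasing} in the underlying node set, accuracy of the estimate on $S_v$ gives no upper bound on the estimate over $N(v;h)$, hence no bound on $k[v,h]$ at initialization: the inequality runs in the wrong direction for containment. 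So the running-time event is a genuinely separate application of Proposition~\ref{prop:approx} ($n$ additional elementary events), and the honest union bound gives simultaneous failure probability at most $3ne^{-C} = \tfrac{3}{2}\delta$, not $\delta$. This blemish is arguably in the paper's own statement as much as in your write-up; the clean fix is to take $C = \log(3n/\delta)$ (or to read the approximation guarantee and the time bound as separate $1-\delta$ and $1-\delta/2$ statements), which perturbs $M$ and the running time only by constant factors and leaves the displayed asymptotic bound unchanged. With that replacement, the rest of your argument goes through as written.
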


\begin{proposition}
\algcore requires $\bigO{hmM}$ memory.
\end{proposition}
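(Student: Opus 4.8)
The plan is to account for every persistent data structure that \algcore maintains and show that each contributes at most $\bigO{hmM}$, so that their sum is $\bigO{hmM}$ as well. First I would enumerate the structures: the rank array $r$, the scalars $k[v, i]$, $m[v, i]$, and $d[v]$, the sampled neighborhoods $B[v, i]$ together with their counters and pointer arrays, and the auxiliary sets $D[v, i]$ with their pointer arrays. The scalar arrays are immediate: $r$ uses $\bigO{n}$ cells, and $k$, $m$, $d$ together use $\bigO{nh}$ cells. Assuming $m = \Omega(n)$ (equivalently, no isolated vertices, which we may assume without loss of generality since an isolated vertex has core number $0$ and can be stripped in a preprocessing step), both are absorbed into $\bigO{hmM}$.

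The dominant term comes from the augmented sets. By construction the threshold $k[v, i]$ is chosen so that $\abs{B[v, i]} \leq M$; hence each $B[v, i]$ stores at most $M$ pairs $(u, z)$. Crucially, each element $u \in B[w, i - 1]$ carries a pointer array $(q_v \mid v \in A(w))$ of length $\deg w$ that indexes into the sets $B[v, i]$ over the neighbors $v$ of $w$. Summing over one level, the pointer arrays cost $\sum_w \abs{B[w, i - 1]} \deg w \leq \sum_w M \deg w = 2mM$, and over the $h$ levels this yields $\bigO{hmM}$. The bare entries of $B$ cost only $\sum_{i, v} \abs{B[v, i]} \leq hnM \subseteq \bigO{hmM}$, so they do not dominate.

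For $D[v, i]$ the key point, and the step that needs care, is that unlike $B[v, i]$ it is \emph{not} capped at $M$ elements. However, $D[v, i]$ is formed as the union of the sets $X_w \subseteq B[w, i - 1]$ over $w \in A(v)$, so $\abs{D[v, i]} \leq \sum_{w \in A(v)} \abs{X_w} \leq M \deg v$. Summing, $\sum_{i, v} \abs{D[v, i]} \leq \sum_{i, v} M \deg v = \bigO{hmM}$, and the matching pointer arrays on the elements of $B[w, i - 1]$ that index into $D[v, i]$ obey the same $\bigO{hmM}$ bound as the analysis above. Finally, the stacks $U$ and $W$ used inside \algupdate are transient and hold only references to nodes and to elements already counted in the $B$ sets, so their peak size is likewise $\bigO{hmM}$.

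Adding all contributions gives the claimed $\bigO{hmM}$ bound. I expect the main obstacle to be the $D[v, i]$ term: one must observe that although each $D[v, i]$ can overflow past $M$, its size is still controlled by $\deg v$, so that after summation it telescopes against the handshake sum $\sum_v \deg v = 2m$ exactly as the $B$ pointer arrays do, rather than blowing up to something like $\bigO{hnM^2}$.
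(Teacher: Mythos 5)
Your proof is correct and takes essentially the same route as the paper's: bound each $B[v, i]$ by $M$ entries, each costing $\bigO{\deg v}$ for its pointer array, bound each $D[v, i]$ by $M \deg v$ entries of constant size, and sum over all $v$ and the $h$ levels via $\sum_v \deg v = 2m$. The only differences are presentational---you additionally account for the scalar arrays and the transient stacks under the harmless $m = \Omega(n)$ assumption, details the paper's terser proof leaves implicit.
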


\begin{proof}
An entry in $B[v, i]$ requires $\bigO{\deg v}$
memory for the pointer information. An entry in $D[v, i]$ only requires $\bigO{1}$ memory.
Since $\abs{B[v, i]} \leq M$ and $\abs{D[v, i]} \leq M \deg v$, the claim follows.
\end{proof}

In order to speed-up the algorithm further we employ two additional heuristics.
First, we can safely delay the initialization of $B[v, i]$ until every $B[w, i - 1]$, where $w \in A(v)$, yields a core
estimate that is below the current core number.
Delaying the initialization allows us to ignore $B[v, i]$ during \algupdate.
Second, if the current core number exceeds the number of remaining nodes, then we can
stop and use the current core number for the remaining nodes. While these heuristics do not provide
any additional theoretical advantage, they speed-up the algorithm in practice.

\section{Distance-generalized dense subgraphs}
\label{sec:dense}

In this section we will study distance-generalized dense subgraphs, a notion introduced by~\citet{bonchi2019distance}.

In order to define the problem,
let us first define $E_h(X)$ to be the node pairs in $X$ that are
connected with an $h$-path in $X$. We exclude the node pairs of form $(u, u)$.
Note that $E(X) = E_1(X)$.

We define the $h$-density of $X$ to be the ratio of $E_h(X)$ and $\abs{X}$,
\[
	\dens{X; h} = \frac{\abs{E_h(X)}}{\abs{X}}\quad.
\]
We will sometimes drop $h$ from the notation if it is clear from the context.

\begin{problem}[\prbdense]
Given a graph $G$ and $h$ find the subgraph $X$ maximizing $\dens{X; h}$.
\end{problem}

\prbdense can be solved for $h = 1$ in polynomial time using fractional
programming combined with minimum cut problems~\citep{Goldberg:1984up}. However, the
distance-generalized version of the problem is \np-hard.

\begin{proposition}
\label{prop:densenp}
\prbdense is \np-hard even for $h = 2$.
\end{proposition}

To prove the result we will use extensively the following lemma.

\begin{lemma}
\label{lem:dens}
Let $X$ be the densest subgraph. Let $Y \subseteq X$ and $Z \cap X = \emptyset$. Then
\[
	\frac{\abs{E_h(X)} - \abs{E_h(X \setminus Y)}}{\abs{Y}} \geq \dens{X} \geq \frac{\abs{E_h(X \cup Z)} - \abs{E_h(X)}}{\abs{Z}}
\]
\end{lemma}

\begin{proof}
Due to optimality $\dens{X} \geq \dens{X \setminus Y}$. Then
\[
	\frac{\abs{E_h(X)} - \abs{E_h(X \setminus Y)}}{\abs{Y}} \geq \frac{\abs{E_h(X)} - \dens{X} (\abs{X} - \abs{Y})}{\abs{X} - (\abs{X} - \abs{Y})}  = \dens{X}\quad.
\]
Similarly, $\dens{X} \geq \dens{X \cup Z}$ implies
\[
	\frac{\abs{E_h(X \cup Z)} - \abs{E_h(X)}}{\abs{Z}} \leq \frac{\dens{X}(\abs{X} + \abs{Z}) - \abs{E_h(X)}}{(\abs{X} + \abs{Z}) - \abs{X}} = \dens{X},
\]
proving the claim.
\end{proof}

\begin{proof}[Proof of Proposition~\ref{prop:densenp}]
To prove the claim we will reduce \prbmatch to our problem.
In an instance of \prbmatch we are given
a universe $U = u_1, \ldots, u_n$ of size $n$ and $m$ sets $\fm{C}$ of size $3$
and ask whether there is an exact cover of $U$ in $\fm{C}$.

We can safely assume that $C_1$ does not intersect with any other set. Otherwise, we can
add a new set and 3 new items without changing the outcome of the instance.

In order to define the graph, let us first define $k = 12m$ and $\ell = 3k(3k - 1)/2 + 6k - 1$.
Note that $k \geq 12$.

For each $u_i \in U$, we add $k$ nodes $a_{ij}$, where $j = 1, \ldots, k$. For each $a_{ij}$,
we add $2\ell$ unique nodes that are all connected to $a_{ij}$. We will denote the resulting star
with $S_{ij}$. We will select a  non-center node from $S_{ij}$ and denote it by $b_{ij}$.
Finally, we write $S'_{ij} = S_{ij} \setminus \set{a_{ij}, b_{ij}}$.

For each set $C_t \in \fm{C}$, we add a node, say $p_t$, and connect it to $b_{ij}$ for every $u_i \in C$
and $j = 1, \ldots, k$. We will denote the collection of these nodes with $P$.
We connect every node in $P$ to $p_1$.

Let $X$ be the nodes of the densest subgraph for $h = 2$.
Let $Q = P \cap X$ and let $\fm{R}$ be the corresponding sets in $\fm{C}$.

To simplify the notation we will need the following counts of node pairs.
First, let us define $\alpha$ to be the number of node pairs in a single $S_{ij}$
connected with a 2-path,
\[
	\alpha = E_2(S_{ij}) = {2 \ell + 1 \choose 2}\quad.
\]
Second, let us define the number of node pairs connected with a 2-path using a single node $p_t \in P$.
Since $p_t$ connects $3k$ nodes $b_{ij}$ and reaches $3k$ nodes $b_{ij}$ and $3k$ nodes $a_{ij}$, we have
\[
	\beta = {3k \choose 2} + 6k\quad.
\]

Finally, consider $W$ consisting of a single $p_t$ and the corresponding $3k$ stars.
Let us write $\gamma = 3k\alpha + \beta$ to be the number of node pairs 
connected by a 2-path in $W$.

We will prove the proposition with a series of claims.

\emph{Claim 1:} $\dens{X} > \ell$.
The density of $W$ as defined above is
\[
	\dens{W} = \frac{3k \alpha + \beta}{3k(2\ell + 1) + 1} > \frac{3k \alpha + \ell}{3k(2\ell + 1) + 1}
	= \ell\quad.
\]
Thus, $\dens{X} \geq \dens{W} > \ell$.

\emph{Claim 2:} $\fm{R}$ is disjoint.
To prove the claim, assume otherwise and let $C_t$, with $t > 1$, be a set overlapping
with some other set in $\fm{R}$. 

Let us bound the number of node pairs that are \emph{solely} connected with $p_t$. 
The node $p_t$ connects $3k + 1$ nodes in $V$. Out of these nodes at least $k + 1$ nodes are
connected by another node in $X$. In addition, $p_t$ reaches to $a_{ij}$ and $b_{ij}$,
where $u_i \in C_t$ and $j = 1, \ldots, k$, that is, $6k$ nodes in total.
Finally, $p_t$ may connect to every other node in $P$, at most $m - 1$ nodes, and every $a_{ij}$ connected to $p_1$, at most $3k$ nodes.
In summary, we have
\[
\begin{split}
	\abs{E_2(X)} - \abs{E_2(X \setminus \set{p_t})} & \leq {3k + 1 \choose 2} - {k + 1 \choose 2} + 6k + m - 1 + 3k\\
	& = \ell - k^2/2 + 5k/2 + m + 3k 
	< \ell \leq \dens{X}\quad.
\end{split}
\]
Lemma~\ref{lem:dens} with $Y = \set{p_t}$ now contradicts the optimality of $X$.
Thus, $\fm{R}$ is disjoint.

\emph{Claim 3:} Either $S_{ij} \subseteq X$ or  $S_{ij} \cap X = \emptyset$.
To prove the claim assume that $S_{ij} \cap X \neq \emptyset$.

Assume that $b_{ij} \notin X$. Then $S_{ij} \cap X$ is a disconnected component with density less than $\ell$,
contradicting Lemma~\ref{lem:dens}.
Assume that $b_{ij} \in X$ and $a_{ij} \notin X$. Then deleting $b_{ij}$ will reduce at most $3k + m - 1 < \ell$ connected node pairs,
contradicting Lemma~\ref{lem:dens}.

Assume that $b_{ij}, a_{ij} \in X$.  
If $S'_{ij} \cap X = \emptyset$, then deleting $a_{ij}$ will reduce at most $2$ connected node pairs,
contradicting Lemma~\ref{lem:dens}.
Assume now there are $u \in S'_{ij} \cap X$ and $w \in  S'_{ij} \setminus X$.
Then $\abs{E_2(X \cup \set{w})} - \abs{E_2(X)} > \abs{E_2(X)} - \abs{E_2(X \setminus \set{u})}$,
contradicting Lemma~\ref{lem:dens}.
Consequently, $S_{ij} \subseteq X$.

\emph{Claim 4:} If $p_t \in X$, then $X$ contains every corresponding $S_{ij}$.
To prove the claim assume otherwise.

Assume first that there are no corresponding $S_{ij}$ in $X$ for $p_t$.
If $t > 1$, then $p_t$ reaches to at most $m - 1 + 3k$ nodes.
If $t = 1$, then $p_1$ connects at most $m - 1$ nodes and reaches to at most $(m - 1)(3k + 1)$ nodes.

Both cases lead to
\[
\begin{split}
    \abs{E_2(X)} - \abs{E_2(X \setminus \set{p_t})} & \leq {m - 1 \choose 2} + (m - 1)(3k + 1) 
    < \ell < \dens{X},
\end{split}
\]
contradicting Lemma~\ref{lem:dens}.

Assume there is at least one corresponding $S_{ij}$ in $X$ but not all, say $S_{i'j'}$ is missing.
Then
\[
	\abs{E_2(X)} - \abs{E_2(X \setminus S_{ij})} < \abs{E_2(X \cup S_{i'j'})} - \abs{E_2(X)},
\]
contradicting Lemma~\ref{lem:dens}.

\emph{Claim 5:}
No $S_{ij}$ without corresponding $p_t$ is included in $X$. To prove the claim note
that such $S_{ij}$ is disconnected and has density of $\ell$, 
contradicting Lemma~\ref{lem:dens}.

The previous claims together show that the density of $X$ is equal to
\[
	\dens{X} = \frac{\abs{Q} \gamma + (\abs{Q} - 1)(6k) + {\abs{Q} \choose 2}}{\abs{Q}(3k(2\ell + 1) + 1)},
\]
which is an increasing function of $\abs{Q}$. Since $\fm{R}$ is disjoint and maximal, the \prbmatch instance has a solution
if and only if $\fm{R}$ is a solution.
\end{proof}

One of the appealing aspects of $\dens{X; h }$ for $h = 1$ is that we can
2-approximate in linear time~\citep{Charikar:2000tg}. This is done by ordering the
nodes with \algexact, say $v_1, \ldots, v_n$ and then selecting the densest
subgraph of the form $v_1, \ldots, v_i$.

The approximation guarantee for $h > 1$ is weaker even if use \algexact.
\citet{bonchi2019distance} showed that $2\dens{Y} \geq \sqrt{2\dens{X} + 1/4} - 1/2$
when we use \algexact.

Using \algcore instead of \algexact poses additional challenges.  In order to
select a subgraph among $n$ candidates, we need to estimate the density of its
subgraph. We cannot use $d[v]$ used by \algcore as these are the values that \algcore
uses to determine the order.

Assume that \algcore produced order of vertices $v_1, \ldots, v_n$, first
vertices deleted first. To find the densest graph among the candidate, we
essentially repeat \algcore except now we delete the nodes using the order
$v_1, \ldots, v_n$. We then estimate the number of edges with the identity
\[
	2\abs{E_h(X)} = \sum_{v \in X} \deg_h(v; X)\quad.
\]
We will refer to this algorithm as \algdense. The pseudo-code for \algdense is given in Algorithm~\ref{alg:dense}.

\begin{algorithm}
\caption{$\algdense(G, v_1, \ldots, v_n, \epsilon, C)$ approximative dense subgraph. Setting
$C = \log(n^2 / \delta)$ yields an approximation with $1 - \delta$ probability.}
\label{alg:dense}

\ForEach {$v \in V$} {
    $r[v] \define \geom{1/2}$\;
    $B[v, 0] \define \set{v}$\;
    $k[v, 0] \define 0$\;
}

$M \define 1 + \frac{4(2 + \epsilon)}{\epsilon^2}(C + \log 8)$\;

\ForEach {$i = 1, \ldots, h$} {
    \lForEach {$v \in V$} {
        $\algcompute(v, i)$
    }
}
$R \define \sum_v d[v]$\;

\ForEach {$i = 1,\ldots, n$} {
	estimate $\dens{v_i, \ldots, v_n}$ with $R / (n - i + 1)$\;
    $\algupdate(v_i)$\;
	keep $R$ updated when recomputing $d[v]$ with \algestimate\;
}
\Return densest tested subgraph\;
\end{algorithm}

The algorithm yields to the following guarantee.

\begin{proposition}
\label{prop:dense}
Assume $\epsilon > 0, C > 0$ and $h$. Define $\gamma = \frac{1 - \epsilon}{1 + \epsilon}$.
For any given $k$, let $C_k$ be the $(k, h)$-core. Define 
\[
	\beta = \min_{k} \frac{\abs{C_k}}{\abs{C_{k\gamma}}}
\]
to be the smallest size ratio between $C_k$ and $C_{k\gamma}$.

Let $X$ be the $h$-densest subgraph.

Let $c'$ be an $\epsilon$-approximative core map and let $v_1, \ldots, v_n$ be the corresponding vertex order.
Let $Y = \algdense(G, v_1, \ldots, v_n, \epsilon, C)$
Then
\[
	2\dens{Y} \geq  \gamma \beta \pr{\sqrt{2 \dens{X} + 1/4 } - 1/2}
\]
with probability $1 - n^2\exp\pr{-C}$.
\end{proposition}

To prove the result we need the following lemma.

\begin{lemma}
\label{lem:densebound}
For any given $k$, define $C'_k = \set{v \mid c'(v) \geq k}$. Then
\[
	\dens{C_{k(1 - \epsilon)}'} \geq \beta \dens{C_k}\quad.
\]
\end{lemma}

\begin{proof}
Write $F = C_{k(1 - \epsilon)}'$.
Let $v \in C_k$. Then $c'(v) \geq (1 - \epsilon) c(v) \geq (1 - \epsilon)k$ and so $v \in F$.  Thus $C_k \subseteq F$.
Conversely, let $v \in F$. Then
$(1 + \epsilon)c(v) \geq c'(v) \geq k(1 - \epsilon)$ and so $v \in C_{\gamma k}$.
Thus $F \subseteq C_{\gamma k}$. The definition of $\beta$ now implies
\[
	\dens{F} = \frac{\abs{E_h(F)}}{\abs{F}} \geq \beta \frac{\abs{E_h(F)}}{\abs{C_k}} \geq \beta \frac{\abs{E_h(C_k)}}{\abs{C_k}}
\]
proving the claim.
\end{proof}

\begin{proof}[Proof of Proposition~\ref{prop:dense}]
Let $c$ be the core map produced by \algexact.
For any given $k$, define $C'_k = \set{v \mid c'(v) \geq k}$.

Let $u \in X$ be the first vertex deleted by \algexact.
Let $b = \deg_h(u; X)$ be its $h$-degree.
Write $X' = X \setminus \set{u}$. Since $X$ is optimal,
\[
	\frac{\abs{E_h(X)}}{\abs{X}} \geq
	\frac{\abs{E_h(X')}}{\abs{X'}}\quad.
\]
Deleting $u$ from $X$ will delete $b$ node pairs from $E_h(X)$ containing $u$.
In addition, every node in the $h$-neighborhood of $u$ may be disconnected from each other,
potentially reducing the node pairs by ${b \choose 2}$. In summary,
\[
	\abs{E_h(X)} - \abs{E_h(X')} \leq b + {b \choose 2} = {b + 1 \choose 2}\quad.
\]

Combining the two inequalities leads to
\[
	{b + 1 \choose 2} \geq \abs{E_h(X)} - \frac{\abs{E_h(X)}(\abs{X} - 1)}{\abs{X}} = \frac{\abs{E_h(X)}}{\abs{X}} = \dens{X}\quad.
\]
Solving for $b$ results in 
\begin{equation}
\label{eq:dense1}
	b  \geq \sqrt{2 \dens{X} + 1/4 } - 1/2\quad.
\end{equation}

Let $Z$ be the nodes right before $u$ is deleted by \algexact.
Note that $c(u) \geq \deg_h(u; Z) \geq \deg_h(u; X) = b$.

Let $C_k$ be the smallest core containing $u$, that is, $c(u) = k$.
By definition, $\deg_h(v; C_k) \geq k \geq b$,  for all $v \in C_k$.

Let $F = C'_{k(1 - \epsilon)}$.
Lemma~\ref{lem:densebound} now states that
\begin{equation}
\label{eq:dense2}
	2\dens{F} \geq 2 \beta \dens{C_k} = \beta \frac{1}{\abs{C_k}} \sum_{v \in C_k} \deg_h(v; C_k) \geq \beta k \geq \beta b\quad.
\end{equation}

Let $d'(Z)$ be the estimated density for a subgraph $Z$.

Proposition~\ref{prop:approx} and the union bound state that
\begin{equation}
\label{eq:dense3}
	\dens{Y} \geq \frac{1}{1 + \epsilon} d'(Y) \geq \frac{1}{1 + \epsilon} d'(F) \geq \gamma \dens{F}
\end{equation}
with probability $1 - n^2e^{-C}$.
Eqs.~\ref{eq:dense1}--\ref{eq:dense3} prove the inequality in the claim.
\end{proof}

\algdense is essentially \algcore so we can apply Proposition~\ref{prop:time}.

\begin{corollary}
Assume real values $\epsilon > 0$, $\delta > 0$, a graph $G$ with $n$ nodes and $m$ edges.
Let $C = \log(n^2 / \delta)$. Then
\algdense
runs in
\[
    \bigO{\frac{hm\log n / \delta}{\epsilon^{2}}\log \frac{n\epsilon^{2}}{\log n / \delta}}
\]
time and Proposition~\ref{prop:dense} holds with $1 - \delta$ probability.
\end{corollary}

Finally, let us describe a potentially faster variant of the algorithm that we will use in our experiments.
The above proof will work even if replace $C_k$ with the most inner (exact) core.
Since $F = C'_{k(1 - \epsilon)}$ we can prune all the vertices for which $c'(v) < k(1 - \epsilon)$.
The problem is that we do not know $k$ but we can lower bound it with $k \geq k'/(1 + \epsilon)$,
where $k' = \max_v c'(v)$. In summary, before running \algestimate we remove all the vertices for
which $c'(v) < \gamma k'$.

\section{Related work}\label{sec:related}

The notion of distance-generalized core decomposition was proposed
by~\citet{bonchi2019distance}. The authors provide several heuristics to
significantly speed-up the baseline algorithm (a variant of an algorithm
proposed by~\citet{batagelj2011fast}).
Despite being significantly faster than
the baseline approach, these heuristics still have the computational complexity
in $\bigO{nn'(n' + m')}$, where $n'$ and $m'$ are the numbers of nodes and
edges in the largest $h$-neighborhood. For dense graphs and large values of
$h$, the sizes $n'$ and $m'$ can be close $n$ and $m$, leading to the
computational time of $\bigO{n^2m}$. We will use these heuristics as baselines
in Section~\ref{sec:exp}.

All these algorithms, as well as ours, rely on the same idea of iteratively
deleting the vertex with the smallest $\deg_h(v)$ and updating these
counters upon the deletion. The difference is that the previous works maintain
these counters exactly---and use some heuristics to avoid updating unnecessary
nodes---whereas we approximate $\deg_h(v)$ by sampling, thus reducing the
computational time at the cost of accuracy.

A popular variant of decomposition is a $k$-truss, where each edge is required
to be in at least $k$ triangles~\citep{huang2014querying,zhao2012large,zhang2012extracting,wang2012truss,cohen2008trusses}.
\citet{sariyuce2015finding,sariyuce2016fast} proposed $(r, s)$ nucleus decomposition, an extension of $k$-cores where the notion
nodes and edges are replaced with $r$-cliques and $s$-cliques, respectively. 
\citet{sariyuce2016fast} points out that there are several variants of $k$-trusses, depending 
on the connectivity requirements: \citet{huang2014querying} requires the trusses to be triangle-connected, \citet{cohen2008trusses}
requires them to be connected, and \citet{zhang2012extracting} allows the trusses to be disconnected.

A $k$-core is the largest subgraph whose smallest degree is at least $k$.  A
similar concept is the densest subgraph, a subgraph whose average degree is the
largest~\citep{Goldberg:1984up}. Such graphs are convenient variants for
discovering dense communities as they can be discovered in polynomial time~\citep{Goldberg:1984up},
as opposed to, e.g., cliques that are inapproximable~\citep{zuckerman2006linear}.

Interestingly, the same peeling algorithm that is used for core decomposition
can be use to 2-approximate the densest subgraph~\citep{Charikar:2000tg}.
\citet{tatti2019density} proposed a variant of core decomposition so that the densest subgraph
is equal to the inner core. This composition is solvable in polynomial time
an can be approximated using the same peeling strategy.

A distance-generalized clique is known as $h$-club, which is a subgraph
where every node is reachable by an $h$-path from every node~\citep{mokken1979cliques}.
Here the path must be inside the subgraph.
Since cliques are
$1$-clubs, discovering maximum $h$-clubs is immediately an inapproximable problem.
\citet{bonchi2019distance} argued that $(k, h)$ decomposition can be used to
aid discovering maximum $h$-clubs.

Using sampling for parallelizing (normal) core computation was proposed
by~\citet{esfandiari2018parallel}. Here, the authors sparsify the graph
multiple times by sampling edges. The sampling probability depends on the core
numbers: larger core numbers allow for more aggressive sparsification. 
The authors then use Chernoff bounds to prove the approximation guarantees.
The authors were able to sample edges since the degree in the
sparsified graph is an estimate of the degree in the original graph (multiplied by the sampling probability).
This does not hold for $(k, h)$ core decomposition 
because a node $w \in N(v; h)$ can reach $v$ with several paths.

Approximating $h$-neighborhoods can be seen as an instance of a cardinality estimation problem.
A classic approach for solving such problems is HyperLogLog~\citep{flajolet2007hyperloglog}.
Adopting HyperLogLog or an alternative approach, such as \citep{kane2010optimal}, 
is a promising direction for a future work, potentially speeding up the algorithm further. 
The challenge here is to maintain the estimates as the nodes are removed by \algcore.

\begin{table*}[t]
\caption{Sizes and computational times for the benchmark datasets. Here, $n$ is
the number of nodes $m$ is the number of edges, $M$ is the internal parameter
of \algcore given in Eq.~\ref{eq:buffer}. The running times for the baselines \alglub and \alglb
are taken from~\citep{bonchi2019distance}.
Dashes indicate that the experiments did not finish in 20 hours.
For \dtname{Youtube} and \dtname{Hyves}, \alglub was run with 52 CPU cores.
The remaining experiments are done with a single CPU core.
}
\label{tab:stats}

\pgfplotstabletypeset[
    begin table={\begin{tabular*}{\textwidth}},
    end table={\end{tabular*}},
    col sep=comma,
	columns = {name, n, m, M, h1, b1, c1},
    columns/name/.style={string type, column type={@{\extracolsep{\fill}}l}, column name=\emph{Dataset}},
    columns/n/.style={fixed, set thousands separator={\,}, column type=r, column name=$n$},
    columns/m/.style={fixed, set thousands separator={\,}, column type=r, column name=$m$},
    columns/M/.style={fixed, set thousands separator={\,}, column type=r, column name=$M$},
    columns/b1/.style={column type=r, column name={\alglb}, string type},
    columns/b2/.style={column type=r, column name={\alglb}, string type},
    columns/b3/.style={column type=r, column name={\alglb}, string type},
    columns/c1/.style={column type=r, column name={\alglub}, string type},
    columns/c2/.style={column type=r, column name={\alglub}, string type},
    columns/c3/.style={column type=r, column name={\alglub}, string type},
    columns/h1/.style={column type=r, column name={\algcore}, string type},
    columns/h2/.style={column type=r, column name={\algcore}, string type},
    columns/h3/.style={column type=r, column name={\algcore}, string type},
    every head row/.style={
		before row={
			\toprule
			\arraybackslash
			& & & & \multicolumn{3}{l}{$h = 2$} \\
			\cmidrule(lr){5-7}},
			after row=\midrule},
    every last row/.style={after row=\bottomrule},
]
{stats.csv}

\pgfplotstabletypeset[
    begin table={\begin{tabular*}{\textwidth}},
    end table={\end{tabular*}},
    col sep=comma,
	columns = {name, h2, b2, c2, h3, b3, c3},
    columns/name/.style={string type, column type={@{\extracolsep{\fill}}l}, column name=\emph{Dataset}},
    columns/n/.style={fixed, set thousands separator={\,}, column type=r, column name=$n$},
    columns/m/.style={fixed, set thousands separator={\,}, column type=r, column name=$m$},
    columns/M/.style={fixed, set thousands separator={\,}, column type=r, column name=$M$},
    columns/b1/.style={column type=r, column name={\alglb}, string type},
    columns/b2/.style={column type=r, column name={\alglb}, string type},
    columns/b3/.style={column type=r, column name={\alglb}, string type},
    columns/c1/.style={column type=r, column name={\alglub}, string type},
    columns/c2/.style={column type=r, column name={\alglub}, string type},
    columns/c3/.style={column type=r, column name={\alglub}, string type},
    columns/h1/.style={column type=r, column name={\algcore}, string type},
    columns/h2/.style={column type=r, column name={\algcore}, string type},
    columns/h3/.style={column type=r, column name={\algcore}, string type},
    every head row/.style={
		before row={
			\toprule
			\arraybackslash
			& \multicolumn{3}{l}{$h = 3$} & \multicolumn{3}{l}{$h = 4$} \\
			\cmidrule(lr){2-4}
			\cmidrule(lr){5-7}
			},
			after row=\midrule},
    every last row/.style={after row=\bottomrule},
]
{stats.csv}
\end{table*}

\section{Experimental evaluation}\label{sec:exp}

Our two main goals in experimental evaluation is to study the accuracy and the
computational time of \algcore.

\subsection{Datasets and setup}
We used 8 publicly available benchmark datasets.
\dtname{CaAstro} and \dtname{CaHep} are collaboration networks between researchers.\!\footnote{\url{http://snap.stanford.edu}\label{foot:snap}}
\dtname{RoadPa} and \dtname{RoadTX} are road networks in Pennsylvania and Texas.\!\footref{foot:snap}
\dtname{Amazon} contains product pairs that are often co-purchased in a popular online retailer.\!\footref{foot:snap}
\dtname{Youtube} contains user-to-user links in a popular video streaming service.\!\footnote{\url{http://networkrepository.com/}}
\dtname{Hyves} and \dtname{Douban} contain friendship links in a Dutch and Chinese social networks, respectively.\!\footnote{\url{http://konect.cc/}}
The sizes of the graphs are given in Table~\ref{tab:stats}.

We implemented \algcore in C++\footnote{\url{http://version.helsinki.fi/dacs}}
and conducted the experiments using a single core (2.4GHz).
For \algcore we used 8GB RAM and for \algdense we used 50GB RAM.
In all experiments, we set $\delta = 0.05$.

\subsection{Accuracy}

In our first experiment we compared the accuracy of our estimate $c'(v)$ against the correct core numbers $c(v)$.
As a measure we used the maximum relative error
\[
	\max_{v \in V} \frac{\abs{c'(v) - c(v)}}{c(v)}\quad.
\]
Note that Proposition~\ref{prop:yield} states that the error should be less than $\epsilon$ with high probability.

The error as a function of $\epsilon$ for \dtname{CaHep} and \dtname{CaAstro} datasets is shown in Figure~\ref{fig:accuracy} for $h = 3, 4$.
We see from the results that the error tends to increase as a function of $\epsilon$.
As $\epsilon$ decreases, the internal value $M$ increases, reaching the point where the maximum core number is
smaller than $M$. For such values, Proposition~\ref{prop:yield} guarantees that \algcore produces correct results.
We see, for example, that this value is reached with $\epsilon = 0.20$ for \dtname{CaHep},
and $\epsilon = 0.15$ for \dtname{CaAstro} when $h = 3$, and $\epsilon = 0.35$ for \dtname{Amazon} when $h = 4$.

\begin{figure}[t!]
\begin{tabular}{ll}
\begin{tikzpicture}
\begin{axis}[xlabel={$\epsilon$}, ylabel= {relative error},
    width = 5.7cm,
    xmin = 0.1,
    xmax = 0.5,
    ymax = 0.15,
    scaled y ticks = false,
    cycle list name=yaf,
    yticklabel style={/pgf/number format/fixed},
    no markers,
    ]
\addplot table[x = eps, y = cahea] {accuracy.csv} node[pos=0.58, sloped, anchor=north, text=black, inner sep = 1pt] {\dtname{CaHep}};
\addplot table[x = eps, y = caasa] {accuracy.csv} node[pos=0.32, sloped, anchor=south, text=black, inner sep = 1pt] {\dtname{CaAstro}};
\addplot table[x = eps, y = amza] {accuracy.csv} node[pos=0.75, sloped, anchor=south, text=black, inner sep = 1pt] {\dtname{Amazon}};
\pgfplotsextra{\yafdrawaxis{0.1}{0.5}{0}{0.15}}
\end{axis}
\end{tikzpicture}&
\begin{tikzpicture}
\begin{axis}[xlabel={$\epsilon$}, ylabel= {time (seconds)},
    width = 5.7cm,
    xmin = 0.1,
    xmax = 0.5,
	ymin = 0,
    scaled y ticks = false,
    cycle list name=yaf,
    yticklabel style={/pgf/number format/fixed},
    no markers,
    ]
\addplot table[x = eps, y = cahet] {accuracy.csv} node[pos=0.02, sloped, anchor=south, text=black, inner sep = -1pt] {\dtname{CaHep}};
\addplot table[x = eps, y = caast] {accuracy.csv} node[pos=0.55, sloped, anchor=south, text=black, inner sep = 1pt] {\dtname{CaAstro}};
\addplot table[x = eps, y = amzt] {accuracy.csv} node[pos=0.10, sloped, anchor=north, text=black, inner sep = 1pt] {\dtname{Amazon}};
\pgfplotsextra{\yafdrawaxis{0.1}{0.5}{0}{95}}
\end{axis}
\end{tikzpicture}\\
\begin{tikzpicture}
\begin{axis}[xlabel={$\epsilon$}, ylabel= {relative error},
    width = 5.7cm,
    xmin = 0.1,
    xmax = 0.5,
    ymax = 0.2,
    scaled y ticks = false,
    cycle list name=yaf,
    yticklabel style={/pgf/number format/fixed},
    no markers,
    ]
\addplot table[x = eps, y = cahea] {accuracy2.csv} node[pos=0.44, sloped, anchor=north, text=black, inner sep = 1pt] {\dtname{CaHep}};
\addplot table[x = eps, y = caasa] {accuracy2.csv} node[pos=0.32, sloped, anchor=south, text=black, inner sep = 1pt] {\dtname{CaAstro}};
\addplot table[x = eps, y = amza] {accuracy2.csv} node[pos=0.4, sloped, anchor=south, text=black, inner sep = 1pt] {\dtname{Amazon}};
\pgfplotsextra{\yafdrawaxis{0.1}{0.5}{0}{0.20}}
\end{axis}
\end{tikzpicture}&
\begin{tikzpicture}
\begin{axis}[xlabel={$\epsilon$}, ylabel= {time (seconds)},
    width = 5.7cm,
    xmin = 0.1,
    xmax = 0.5,
	ymin = 0,
    scaled y ticks = false,
    cycle list name=yaf,
    yticklabel style={/pgf/number format/fixed},
    no markers,
    ]
\addplot table[x = eps, y = cahet] {accuracy2.csv} node[pos=0.55, sloped, anchor=north, text=black, inner sep = -1pt] {\dtname{CaHep}};
\addplot table[x = eps, y = caast] {accuracy2.csv} node[pos=0.3, sloped, anchor=south, text=black, inner sep = 1pt] {\dtname{CaAstro}};
\addplot table[x = eps, y = amzt] {accuracy2.csv} node[pos=0.50, sloped, anchor=south, text=black, inner sep = 1pt] {\dtname{Amazon}};
\pgfplotsextra{\yafdrawaxis{0.1}{0.5}{0}{180}}
\end{axis}
\end{tikzpicture}
\end{tabular}

\caption{Relative error and computational time as a function of $\epsilon$ for
\dtname{CaHep}, \dtname{CaAstro}, and \dtname{Amazon} datasets and $h = 3$ (top row) and $h = 4$ (bottom row).}

\label{fig:accuracy}
\end{figure}
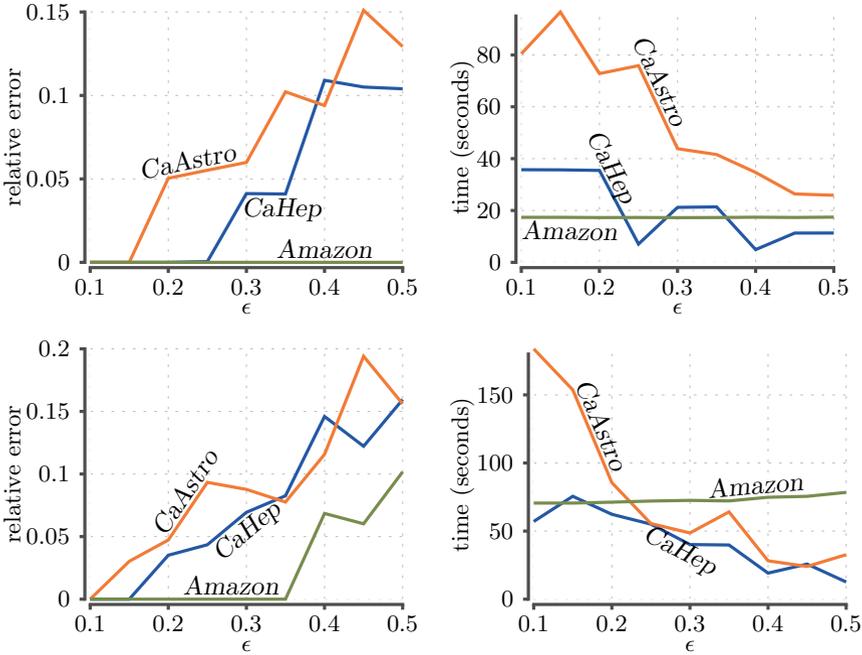

\subsection{Computational time}

Our next experiment is to study the computational time as a function of $\epsilon$; the results are shown in Figure~\ref{fig:accuracy}.
From the results we see that generally computational time increases as $\epsilon$ decreases. The computational time flattens when we reach
the point when $c(v) \leq M$ for every $M$. In such case, the lists $B[v, i]$ match exactly to the neighborhoods $N(v, i)$ and do
not change if $M$ is increased further. Consequently, decreasing $\epsilon$ further will not change the running time. Interestingly,
the running time increases slightly for \dtname{Amazon} and $h = 4$ as $\epsilon$ increases.
This is most likely due to the increased number of \algcompute calls for smaller values of $M$.

Next, we compare the computational time of our method against the baselines \alglb and \alglub proposed by~\citet{bonchi2019distance}.
As our hardware setup is similar, we used the running times for the baselines reported by~\citet{bonchi2019distance}.
Here, we fixed $\epsilon = 0.5$. The results are shown in Table~\ref{tab:stats}.

We see from the results that for $h = 2$ the results are dominated by \alglb. This is due to the fact
that most, if not all, nodes will have $c(v) \leq M$. In such case, \algcore does not use any sampling and does not provide
any speed up. This is especially the case for the road networks, where the core number stays low even for larger values of $h$.
On the other hand, \algcore outperforms the baselines in cases where $c(v)$ is large, whether due to a larger $h$ or due to denser
networks. As an extreme example, \alglub required over 13 hours with 52 CPU cores to compute core for \dtname{Hyves} while \algcore
provided an estimate in about 12 minutes using only 1 CPU core.

Interestingly enough, \algcore solves \dtname{CaAstro} faster when $h = 4$ than when $h = 3$. This is due to the fact that we stop
when the current core value plus one is equal to the number of remaining nodes.

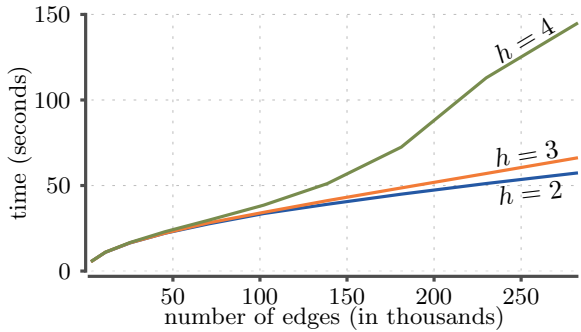
\begin{figure}[ht!]
\begin{center}
\begin{tikzpicture}
\begin{axis}[xlabel={number of edges (in thousands)}, ylabel= {time (seconds)},
    width = 8cm,
    height = 5cm,
	ymin = 0,
	ymax = 150,
    scaled y ticks = false,
    scaled x ticks = false,
    cycle list name=yaf,
    yticklabel style={/pgf/number format/fixed},
    no markers,
    ]
\addplot table[x expr = {\thisrow{m} / 1000}, y = h2] {synth.csv} node[pos=0.90, sloped, anchor=north, text=black, inner sep = 1pt] {$h=2$};
\addplot table[x expr = {\thisrow{m} / 1000}, y = h3] {synth.csv} node[pos=0.90, sloped, anchor=south, text=black, inner sep = 1pt] {$h=3$};
\addplot table[x expr = {\thisrow{m} / 1000}, y = h4] {synth.csv} node[pos=0.90, sloped, anchor=south, text=black, inner sep = 1pt] {$h=4$};
\pgfplotsextra{\yafdrawaxis{2}{282}{0}{150}}
\end{axis}
\end{tikzpicture}
\end{center}
\caption{Computational time as a function of number of edges applied to synthetic data.}
\label{fig:synth}
\end{figure}

To further demonstrate the effect of the network size on the computation time
we generate a series of synthetic datasets.  Each dataset is stochastic
blockmodel with 10 blocks of equal size, $C_1, \ldots, C_{10}$.
To add a hierarchical structure we set the
probability of an edge between nodes in $C_i$ and $C_j$ with $i < j$ to be
$10^{-6} i^2$.
We vary the number of nodes from $10\,000$ to $100\,000$.
The computational times for our method, with $h = 2, 3, 4$ and $\epsilon = 0.5$,
are shown in Figure~\ref{fig:synth}.
As expected, the running times increase as the number of edges increase. Moreover, larger $h$ require more processing time.
We should stress that while Corollary~\ref{cor:time} bounds the running time as quasi-linear,
in practice the trend depends on the underlying model.

\subsection{Dense subgraphs}

\begin{table*}[t]
\caption{Densities and sizes of discovered dense subgraphs for the benchmark datasets.}
\label{tab:dense}

\pgfplotstabletypeset[
    begin table={\begin{tabular*}{\textwidth}},
    end table={\end{tabular*}},
    col sep=comma,
	columns = {name, d2, n2, d3, n3, d4, n4},
    columns/name/.style={string type, column type={@{\extracolsep{\fill}}l}, column name=\emph{Dataset}},
    columns/d2/.style={fixed, set thousands separator={\,}, column type=r, column name=$\dens{X}$},
    columns/d3/.style={fixed, set thousands separator={\,}, column type=r, column name=$\dens{X}$},
    columns/d4/.style={fixed, set thousands separator={\,}, column type=r, column name=$\dens{X}$},
    columns/n2/.style={fixed, set thousands separator={\,}, column type=r, column name=$\abs{X}$},
    columns/n3/.style={fixed, set thousands separator={\,}, column type=r, column name=$\abs{X}$},
    columns/n4/.style={fixed, set thousands separator={\,}, column type=r, column name=$\abs{X}$},
    every head row/.style={
		before row={
			\toprule
			\arraybackslash
			& \multicolumn{2}{l}{$h = 2$} & \multicolumn{2}{l}{$h = 3$} & \multicolumn{2}{l}{$h = 4$} \\
			\cmidrule(lr){2-3}
			\cmidrule(lr){4-5}
			\cmidrule(lr){6-7}
			},
			after row=\midrule},
    every last row/.style={after row=\bottomrule},
]
{dense.csv}
\end{table*}

Finally, we used \algdense to estimate the densest subgraph for $h = 2, 3, 4$.
We set $\epsilon = 0.5$ and $\delta = 0.05$. The results, shown in
Table~\ref{tab:dense}, are as expected.  Both the density and the size of the
$h$-densest subgraphs increase as the function of $h$.  The dense subgraphs are
generally smaller and less dense for the sparse graphs, such as, road networks.

In our experiments, the running times for \algdense were generally smaller but comparable
to the running times of \algcore. The speed-up is largely due to the pruning of nodes
with smaller core numbers. The exception was \dtname{Youtube} with $h = 3$, where \algdense
required over 23 minutes. The slowdown is due to \algcore using lazy initialization of $B[v, i]$
whereas \algdense needs $B[v, h]$ to be computed in order to obtain $d[v]$. This is also the reason
why \algdense requires more memory in practice.


\section{Concluding remarks}
\label{sec:conclusions}

In this paper we introduced a randomized algorithm for approximating
distance-generalized core decomposition. The major advantage over the exact approximation
is that the approximation can be done in $\bigO{\epsilon^{-2} hm (\log^2 n - \log \delta)}$ time,
whereas the exact computation may require $\bigO{n^2m}$ time. We also studied distance-generalized dense subgraphs  
by proving that the problem is \np-hard and extended the guarantee results of \citep{bonchi2019distance} to approximate
core decompositions.

The algorithm is based on sampling the $h$-neighborhoods of the nodes. We prove the approximation
guarantee with Chernoff bounds. Maintaining the sampled $h$-neighborhood requires carefully designed
bookkeeping in order to obtain the needed computational complexity. This is especially the case
since the sampling probability may change as the graph gets smaller during the decomposition.

In practice, the sampling complements the exact algorithm. For the setups
where the exact algorithm struggles, our algorithm outperforms the exact approached by a large
margin. Such setups include well-connected networks and values $h$ larger than 3.

An interesting direction for future work is to study whether the heuristics introduced by~\citet{bonchi2019distance}
can be incorporated with the sampling approach in order to obtain even faster decomposition method.

\paragraph{Acknowledgments.}
This research is supported by the Academy of Finland project MALSOME (343045).

\appendix
\section{Proof of Proposition~\ref{prop:approx}}

We start with stating several Chernoff bounds. 

\begin{lemma}[Chernoff bounds]
Let $X_1, \ldots, X_d$ be $d$ independent Bernoulli random variables with $P(X_i = 1) = \mu$.
Let $S = \sum_{i = 1}^d X_i$.
Then
\begin{align}
	P(S \geq (1 + \epsilon) d\mu) & < \exp\pr{-\frac{\epsilon^2}{2 + \epsilon} d \mu}, \label{eq:ch1} \\
	P(S \leq (1 - \epsilon) d\mu) & < \exp\pr{-\frac{\epsilon^2}{2} d \mu}, \quad\text{and}  \label{eq:ch2} \\
	P(\abs{S - d \mu} \geq \epsilon d\mu) & < 2\exp\pr{-\frac{\epsilon^2}{2 + \epsilon} d \mu}\quad . \label{eq:ch3}  
\end{align}
\end{lemma}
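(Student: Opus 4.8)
The plan is to use the exponential-moment (Chernoff) method, handling all three bounds by a single exponential tilt and then reducing each to an elementary scalar inequality. For the upper tail in Eq.~\ref{eq:ch1}, fix $t > 0$ and apply Markov's inequality to $e^{tS}$. By independence, $E\spr{e^{tS}} = \prod_{i=1}^d E\spr{e^{tX_i}} = \pr{1 + \mu(e^t - 1)}^d \leq \exp\pr{d\mu(e^t - 1)}$, where the last step uses $1 + x \leq e^x$. Hence $P(S > (1+\epsilon)d\mu) \leq \exp\pr{d\mu(e^t - 1) - t(1+\epsilon)d\mu}$, and choosing the optimal tilt $t = \ln(1+\epsilon)$ collapses this to $\exp\pr{-d\mu\, g(\epsilon)}$ with rate function $g(\epsilon) = (1+\epsilon)\ln(1+\epsilon) - \epsilon$.

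For the lower tail in Eq.~\ref{eq:ch2} I would repeat the computation with a negative tilt $t = \ln(1-\epsilon) < 0$, obtaining $P(S < (1-\epsilon)d\mu) \leq \exp\pr{-d\mu\, h(\epsilon)}$ with $h(\epsilon) = (1-\epsilon)\ln(1-\epsilon) + \epsilon$. The two desired bounds then reduce to the scalar estimates $g(\epsilon) \geq \tfrac{\epsilon^2}{2+\epsilon}$ for $\epsilon > 0$ and $h(\epsilon) \geq \tfrac{\epsilon^2}{2}$ for $0 < \epsilon < 1$. The second is immediate: expanding the logarithm gives $h(\epsilon) = \sum_{m \geq 2} \tfrac{\epsilon^m}{m(m-1)}$, a power series with all nonnegative coefficients whose leading term is exactly $\tfrac{\epsilon^2}{2}$, so $h(\epsilon) - \tfrac{\epsilon^2}{2} = \sum_{m \geq 3} \tfrac{\epsilon^m}{m(m-1)} \geq 0$. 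The first inequality is less clean because both sides are alternating in $\epsilon$; here I would clear denominators and show that $\phi(\epsilon) = (2+\epsilon)\pr{(1+\epsilon)\ln(1+\epsilon) - \epsilon} - \epsilon^2$ is nonnegative for $\epsilon > 0$, by checking $\phi(0) = 0$ and tracking the signs of its first few derivatives until $\ln(1+\epsilon)$ is eliminated.

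Finally, the two-sided bound in Eq.~\ref{eq:ch3} follows from the union bound $P(\abs{S - d\mu} > \epsilon d\mu) \leq P(S > (1+\epsilon)d\mu) + P(S < (1-\epsilon)d\mu)$ together with the observation that $h(\epsilon) = \tfrac{\epsilon^2}{2} \geq \tfrac{\epsilon^2}{2+\epsilon}$, so both tail probabilities are bounded by $\exp\pr{-\tfrac{\epsilon^2}{2+\epsilon}d\mu}$ and their sum is at most twice this quantity. I expect the only genuine work to be the rate-function estimate $g(\epsilon) \geq \tfrac{\epsilon^2}{2+\epsilon}$: the moment-generating-function machinery, the optimal choice of tilt, and the union bound are all routine, whereas pinning down this transcendental inequality with the exact denominator $2+\epsilon$ (rather than a looser textbook constant) is the step that must be carried out carefully, most safely through the monotonicity/derivative argument on $\phi$ sketched above.
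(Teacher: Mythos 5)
Your proposal is correct and takes essentially the same route as the paper: the paper's own proof simply declares Eqs.~\ref{eq:ch1}--\ref{eq:ch2} to be standard multiplicative Chernoff bounds and obtains Eq.~\ref{eq:ch3} by exactly the union-bound argument you give (using that $\epsilon^2/2 \geq \epsilon^2/(2+\epsilon)$, so both tails are dominated by the weaker rate). The only difference is that you also carry out the standard derivation that the paper cites as known --- the MGF tilting with $t = \ln(1\pm\epsilon)$ and the rate-function estimates $(1+\epsilon)\ln(1+\epsilon)-\epsilon \geq \epsilon^2/(2+\epsilon)$ and $(1-\epsilon)\ln(1-\epsilon)+\epsilon \geq \epsilon^2/2$ --- all of which is sound (the first estimate follows most cleanly from $\ln(1+x) \geq 2x/(2+x)$, and your derivative argument for $\phi$ also works), apart from the harmless slip of writing $h(\epsilon) = \epsilon^2/2$ where you mean $h(\epsilon) \geq \epsilon^2/2$.
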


\begin{proof}
Eqs.~\ref{eq:ch1}--\ref{eq:ch2} 
are standard multiplicative Chernoff bounds. Eq.~\ref{eq:ch3} 
is obtained with a union bound of Eqs.~\ref{eq:ch1}--\ref{eq:ch2},
completing the claim.
\end{proof}

To prove Proposition~\ref{prop:approx} we first need the following technical lemma.

\begin{lemma}
\label{lem:nice}
Assume $0 < \epsilon \leq 1/2$.
Let $R_1, \ldots, R_d$ be independent random variables sampled from geometric distribution, $\geom{1/2}$.
Define
\[
	S_i = \abs{\set{j \in \spr{d} \mid R_j \geq i}} \text{ and }
	T_i = \abs{\set{j \in \spr{d} \mid R_j \geq i, j \geq 2}}
\]
to be the number of variables $\set{R_j}$ larger than or equal to $i$.
Assume $C > 0$ and define $M$ as in Eq.~\ref{eq:buffer}.
Assume that $M \leq d$.
Let $\ell \geq 1$ be an integer such that
\begin{equation}
\label{eq:level}
	M2^{\ell - 1} \leq d < M2^{\ell}\quad.
\end{equation}
Then with probability $1 - \exp\pr{-C}$
we have
\begin{equation}
\label{eq:bound}
	\ell = 1\quad\text{or}\quad S_{\ell - 2} > M,\quad\text{and}\quad
	S_{\ell + 1} \leq M,
\end{equation}
and
\begin{equation}
\label{eq:approx}
	\abs{T_k - \mu_k (d - 1)} \leq \epsilon \mu_k (d - 1),
\end{equation}
where $k = \ell -1, \ell, \ell + 1$ and
	$\mu_k = 2^{-k}$.
\end{lemma}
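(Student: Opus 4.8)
The plan is to recognize $S_i$ and $T_i$ as sums of independent Bernoulli variables and then invoke the three Chernoff bounds (Eqs.~\ref{eq:ch1}--\ref{eq:ch3}) directly. Since the $R_j$ are independent and $P(R_j \ge i) = 2^{-i} = \mu_i$, the count $S_i$ is a sum of $d$ independent $\bernoulli{\mu_i}$ variables with mean $d\mu_i$, while $T_i$ (which drops the first index) is a sum of $d-1$ such variables with mean $(d-1)\mu_i$. The whole lemma then reduces to applying a tail bound to each of the five quantities $S_{\ell-2}, S_{\ell+1}, T_{\ell-1}, T_\ell, T_{\ell+1}$ and calibrating the failure probabilities against the definition of $M$ in Eq.~\ref{eq:buffer}.

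First I would turn the sandwich $M2^{\ell-1} \le d < M2^{\ell}$ from Eq.~\ref{eq:level} into bounds on the relevant means. For the two claims in Eq.~\ref{eq:bound} this gives plenty of slack: $E[S_{\ell-2}] = d\,2^{-(\ell-2)} \ge 2M$ and $E[S_{\ell+1}] = d\,2^{-(\ell+1)} < M/2$. Hence $S_{\ell-2} > M$ is just the statement that $S_{\ell-2}$ does not fall below half its mean, handled by the lower tail Eq.~\ref{eq:ch2} with deviation $1/2$, and $S_{\ell+1} \le M$ is the statement that $S_{\ell+1}$ does not exceed twice its mean, handled by the upper tail Eq.~\ref{eq:ch1} with deviation $1$. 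Both resulting exponents are of order $M$ and are comfortably larger than $C + \log 8$ once $\epsilon \le 1/2$ is used in Eq.~\ref{eq:buffer}, so each of these two one-sided events fails with probability at most $e^{-(C+\log 8)}$.

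For the concentration claim Eq.~\ref{eq:approx} I would apply the two-sided bound Eq.~\ref{eq:ch3} to each $T_k$, $k \in \set{\ell-1,\ell,\ell+1}$, with relative deviation $\epsilon$; this gives failure probability at most $2\exp(-\frac{\epsilon^2}{2+\epsilon}(d-1)\mu_k)$. The crux is to check that $(d-1)\mu_k$ is large enough for every such $k$. The smallest value occurs at $k = \ell+1$, and using $d \ge M2^{\ell-1}$ I would show $(d-1)\mu_{\ell+1} = (d-1)2^{-(\ell+1)} \ge M/4 - 2^{-(\ell+1)} \ge (M-1)/4$, the last step requiring $\ell \ge 1$. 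By the exact choice $M - 1 = \frac{4(2+\epsilon)}{\epsilon^2}(C+\log 8)$, this forces $\frac{\epsilon^2}{2+\epsilon}(d-1)\mu_k \ge C + \log 8$, so each $T_k$ event fails with probability at most $2e^{-(C+\log 8)}$.

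Finally I would assemble the union bound by counting one-sided tails: the two $S$-events contribute $e^{-(C+\log 8)}$ each and the three two-sided $T$-events contribute $2e^{-(C+\log 8)}$ each, for a total of $8\,e^{-(C+\log 8)} = e^{-C}$, which is exactly where the $\log 8$ in Eq.~\ref{eq:buffer} is spent. I expect the main obstacle to be the $T_{\ell+1}$ estimate: it is the one place with no slack, it forces the regime $\ell \ge 1$ (equivalently $d \ge M$, which is the only regime where an approximation is needed at all, since $d \le M$ yields $k = 0$ and $T_0 = d-1$ exactly), and it is what pins down the precise constants $4(2+\epsilon)/\epsilon^{2}$ and the $-1$ appearing in the definition of $M$.
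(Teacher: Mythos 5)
Your proposal is correct and takes essentially the same route as the paper: the identical five-event decomposition ($S_{\ell-2}$ too small, $S_{\ell+1}$ too large, plus the three two-sided $T_k$ concentration events), the same Chernoff-plus-union-bound structure, and the identical budget allocation $1/8 + 1/8 + 3\cdot 2/8 = 1$ of $e^{-(C+\log 8)}$, with the $T_{\ell+1}$ term pinning down the constant in $M$ exactly as in the paper. The only cosmetic differences are that you bound the two $S$-events with fixed deviations ($1/2$ and $1$) where the paper uses $(1-\epsilon)$ and $(1+2\epsilon)$, and that you explicitly flag the $\ell \geq 1$ requirement, a condition the paper's own proof also relies on but leaves implicit.
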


\begin{proof}
First, note that Eq.~\ref{eq:level} implies
\begin{equation}
\label{eq:mub}
	 2\mu_{\ell + 1}d = \mu_\ell d < M \leq 4d\mu_{\ell + 1} = 2^{-1}d \mu_{\ell - 2}  \quad.
\end{equation}
To prove the lemma, 
let us define the events 
\[
	A_k =
	\abs{T_k - \mu_k (d - 1)} > \epsilon \mu_k (d - 1),
\]
and 
\[
	B_1 = S_{\ell - 2} \leq  M\quad\text{and}\quad B_2 = S_{\ell + 1} > M\quad.
\]
We will prove the result with union bound by showing that
\[
\begin{split}
	& P(A_{\ell - 1} \text{ or } A_{\ell} \text{ or } A_{\ell + 1} \text{ or } B_1  \text{ or } B_2)\\
	&\quad \leq P(A_{\ell - 1}) +  P(A_{\ell}) + P(A_{\ell + 1}) +P(B_1) + P(B_2) \\
	&\quad \leq 2/8 e^{-C} + 2/8 e^{-C} + 2/8 e^{-C} + 1/8 e^{-C} + 1/8 e^{-C} \quad.
\end{split}
\]

To bound $P(A_k)$, observe that $P(R_j \geq k) = \mu_k$. The
Chernoff bound now states that for $k \leq \ell + 1$ we have
\begin{align*}
	P(A_k) & = P\fpr{\abs{T_k - \mu_k (d - 1)} > \epsilon \mu_k (d - 1)} \\
	& < 2 \exp\pr{-\frac{\epsilon^2}{2 + \epsilon} \mu_k (d - 1)} & \text{(Eq.~\ref{eq:ch3})} \\
	& < 2 \exp\pr{-\frac{\epsilon^2}{2 + \epsilon} \mu_{\ell + 1} (d - 1)} & \text{($k \leq \ell + 1$)} \\
	& \leq 2\exp\pr{-\frac{\epsilon^2}{4(2 + \epsilon)} (M - 1)}  & \text{(Eq.~\ref{eq:mub}, $\mu_{\ell + 1} \leq 1/4$)} \\
	& = 2/8\exp\pr{-C}\quad.  & \text{(Eq.~\ref{eq:buffer})} \\
\end{align*}

Next, we bound $B_1$, assuming $\ell > 1$ as otherwise we can ignore the term, with
\begin{align*}
	P(S_{\ell - 2} \leq M) 
	& \leq P(S_{\ell - 2} \leq 2^{-1}\mu_{\ell - 2} d )  & \text{(Eq.~\ref{eq:mub})} \\
	& \leq P(S_{\ell - 2} \leq (1 - \epsilon)\mu_{\ell - 2} d ) & \text{($\epsilon \leq 1/2$)} \\
	& < \exp\pr{-\frac{\epsilon^2}{2} \mu_{\ell - 2}d}  & \text{(Eq.~\ref{eq:ch2})} \\
	& \leq \exp\pr{-\epsilon^2 M}  & \text{(Eq.~\ref{eq:mub})} \\
	& < \exp\pr{-\frac{\epsilon^2}{4(2 + \epsilon)} (M - 1)}  \\
	& = 1/8\exp\pr{-C}  & \text{(Eq.~\ref{eq:buffer})} \\
\end{align*}
and $B_2$ with
\begin{align*}
	P(S_{\ell + 1} > M) 
	& \leq P(S_{\ell + 1} > 2\mu_{\ell + 1} d )  & \text{(Eq.~\ref{eq:mub})} \\
	& \leq P(S_{\ell + 1} > (1 + 2\epsilon)\mu_{\ell + 1} d ) & \text{($\epsilon \leq 1/2$)} \\
	& < \exp\pr{-4\frac{\epsilon^2}{2 + 2 \epsilon} \mu_{\ell + 1}d}  & \text{(Eq.~\ref{eq:ch1})} \\
	& \leq \exp\pr{-\frac{\epsilon^2}{2 + 2 \epsilon} M}  & \text{(Eq.~\ref{eq:mub})} \\
	& < \exp\pr{-\frac{\epsilon^2}{4(2 + \epsilon)} (M - 1)}  \\
	& = 1/8\exp\pr{-C}\quad.  & \text{(Eq.~\ref{eq:buffer})} \\
\end{align*}

The bounds for $P(B_1)$, $P(B_2)$, and $P(A_k)$ complete the proof.
\end{proof}

\begin{proof}[Proof of Proposition~\ref{prop:approx}]
Let $S_i$, $T_i$, and $k$ be as defined in Definition~\ref{def:est} for $\est{\mathcal{R}; M}$.
Let $\ell$ be as defined in Eq~\ref{eq:level}.
We can safely assume that $M \leq d$.

Assume that the events in Lemma~\ref{lem:nice} hold.
Then Eq.~\ref{eq:bound} guarantees that $k = \ell - 1, \ell, \ell + 1$.

Write $Y_i = T_i2^i$ and $Z_i = M2^{i - 1}$.
Eq.~\ref{eq:approx} guarantees that
\begin{equation}
\label{eq:yibound}
	\abs{Y_i - (d - 1)} \leq \epsilon (d - 1)
\end{equation}
for $i = \ell - 1, \ell, \ell + 1$.
If $k = 0$ or $Y_k \geq Z_k$, then $\est{\mathcal{R}; M} = Y_k$ and we are done.

Assume $k > 0$ and $Y_k < Z_k$.
Then immediately
\[
	Z_k > Y_k \geq (1 - \epsilon) (d - 1) \quad.
\]

To prove the other direction, first assume that
$k > \ell - 1$. By definition of $k$, we have $S_{k - 1} > M$ and consequently $T_{k - 1} \geq M$. Thus, 
\[
	Z_k = M2^{k - 1} \leq T_{k - 1}2^{k - 1} = Y_{k - 1} \leq (1 + \epsilon)(d - 1),
\]
where the last inequality is given by Eq.~\ref{eq:yibound}.
On the other hand, if $k = \ell - 1$, then
\[
	Z_k = M2^{k - 1} = M2^{\ell - 2} \leq d/2 \leq d - 1 \leq (1 + \epsilon)(d - 1), 
\]
where the second inequality holds since $k > 0$ implies that $d \geq 2$.
In summary, Eq.~\ref{eq:appr} holds.

Since the events in Lemma~\ref{lem:nice} hold with probability of $1 - \exp\pr{-C}$, the claim follows.
\end{proof}

\bibliography{bibliography}

\end{document}